\documentclass[twoside,11pt]{article}

\newif\ifarxiv
\arxivtrue

\ifarxiv
  \usepackage[preprint]{jmlr2e}
\else
  \usepackage{jmlr2e}
\fi

\usepackage{amsmath}
\usepackage{booktabs}
\usepackage{multirow}
\usepackage{array}
\usepackage{placeins}  % provides \FloatBarrier
\newcolumntype{L}[1]{>{\raggedright\arraybackslash}p{#1}}

\newtheorem{assumption}{Assumption}

\newcommand{\E}{\mathbb{E}}
\newcommand{\Pp}{\mathbb{P}}
\newcommand{\Pq}{\mathbb{P}_{Q}}
\newcommand{\R}{\mathbb{R}}
\newcommand{\Cov}{\operatorname{Cov}}
\newcommand{\Corr}{\operatorname{Corr}}
\newcommand{\Var}{\operatorname{Var}}
\newcommand{\sign}{\operatorname{sign}}
\newcommand{\tr}{\operatorname{tr}}

\ifarxiv
  \jmlrheading{}{2026}{1-\pageref{LastPage}}{}{}{}{Wenxuan Xiao and Xu Cao}
\else
  \jmlrheading{1}{2027}{1-\pageref{LastPage}}{}{}{26-XXXX}{Wenxuan Xiao and Xu Cao}
\fi
\ShortHeadings{Low-Bit Decision Stability in Vector Search}{Xiao and Cao}
\firstpageno{1}
\usepackage{lastpage}

\begin{document}

\title{When Does Low-Bit Quantization Preserve the Decisions of Vector Search?}

\author{\name Wenxuan Xiao \email w.xiao@astrmira.com \\
       \addr Astrmira Tech.
       \AND
       \name Xu Cao \email x.cao@astrmira.com \\
       \addr Astrmira Tech.}

\ifarxiv
  \editor{}
\else
  \editor{TBD}
\fi

\maketitle

\begin{abstract}%
The same two-bit coordinate code that recovers 95\% of exact nearest neighbours on Cohere text embeddings recovers 2\% on GIST image descriptors.  Average distortion does not explain the gap.  A graph-based search algorithm never consumes a distance estimate on its own; it consumes comparisons, and a comparison fails only when quantization noise crosses the specific decision boundary that the comparison sits on.  We develop a theory of low-bit vector search at this level.

The first result is a distribution-free decomposition: the probability that a comparison flips is at most the probability mass of exact margins near zero plus the tail probability of the calibrated residual.  Global fidelity metrics average over both quantities and therefore cannot separate them.  The second result treats residual dependence induced by shared structure.  In the inspected Cohere selected-pair population, residuals sharing a query have strong pooled correlation, and the measured variance of their difference is 7.4 times smaller than the sum of their marginal variances.  The third result is a deterministic coupling theorem for the neighbour-selection call of Vamana: for a frozen candidate permutation, the approximate replay produces the same neighbour list exactly when every candidate-level pruning action agrees with the exact one, and the first disagreement identifies the edge at which the outputs diverge.

We connect these decision-level results to representation geometry through an exact Gaussian oracle.  Stein's lemma gives a closed-form sign--linear covariance identity and shows how off-diagonal covariance enters the ranking signal; an aligned bilinear model yields a strict correlation gain from a deterministic magnitude bit; and a rare-contamination construction proves that marginal Gaussianity, thin-shell concentration, and regular spectra cannot by themselves imply the exponential tails the bounds require.  For representations outside the analytical regime, a held-out block certificate bounds the selective failure risk of a frozen quantized rule from data alone.

On learned, classical, and synthetic embeddings, standardized exact margins predict held-out flip rates with Spearman correlation 0.97 for ranking and 0.99 for pruning, against 0.74 and 0.05 for global rank correlation.  The coupling identity holds in all 960 sampled selection calls.  Direct-ratio selective certificates at 512 blocks average 16.7\% for ranking and 18.7\% for pruning; no selected empirical validation risk exceeds its certificate.  A random rotation raises Cohere's global rank fidelity from 0.58 to 0.93 while leaving one held-out flip rate unchanged and raising another by half, so fidelity does not determine the sign of the change in decision risk.  The framework covers coordinate binary codes, RaBitQ, Lucene BBQ, and product quantizers through a common decision interface.
\end{abstract}

\begin{keywords}
  vector search, binary quantization, decision stability, concentration inequalities, contrastive representations
\end{keywords}

% ============================================================
\section{Introduction}
\label{sec:intro}

Binary quantization compresses a 768-dimensional float vector to 96 bytes and replaces most floating-point distance arithmetic with packed integer operations.  Two recent systems exploit low-bit codes with opposite strategies.  QuIVer~\citep{xiao2026quiver} keeps the coordinate axes of the encoder and builds and searches a proximity graph directly on two-bit scores; RaBitQ~\citep{gao2023rabitq} applies a random rotation before one-bit coding and corrects the estimate.  Both reach competitive recall on their benchmarks.  Yet the two-bit coordinate index of QuIVer reaches 95\% recall at ten neighbours on Cohere-768 embeddings and 2\% on GIST-960 descriptors~\citep{xiao2026quiver}, and the classical analysis of sign codes does not predict either number: the SimHash collision law~\citep{charikar2002similarity} describes random projection directions, not the fixed coordinate axes of a trained encoder, and it says nothing about how the training objective shapes those axes.

\paragraph{The right level of analysis.}
Prior work measures \emph{score fidelity}: how well approximate distances track exact ones, summarized by mean squared error, Spearman correlation, or end-to-end recall.  A graph algorithm does not consume scores in this form.  It executes a sequence of binary comparisons (sort these candidates, prune that edge, advance along this path), and a comparison has consequences only when it crosses its decision boundary.  Two features of this setting are invisible to fidelity metrics.  First, the comparisons that matter have small margins, because candidate sets are selected to be close to the query.  Second, the two residuals entering a comparison can be dependent because the corresponding distances share a query or graph node, so the variance of their difference includes a covariance term.

\begin{example}[Shared error cancels in a comparison]
\label{ex:fidelity-misleads}
On Cohere embeddings with two-bit codes, the calibrated errors of the two distances in a nearest-neighbour comparison have standard deviations $0.024$ and $0.027$ in cosine-distance units.  Their measured difference has standard deviation $0.013$, compared with $0.036$ after omitting the empirical covariance term; the pooled residual correlation is $0.87$.  A metric that averages individual distance errors does not expose this difference residual.
\end{example}

This paper develops the theory at the level of individual comparisons: their exact margins, their correlated residuals, and the way they compose inside a graph algorithm.

\paragraph{Contributions.}
The results form four layers.

\begin{enumerate}
\item \textbf{A decision-level risk decomposition} (\S\ref{sec:decision}).  For any comparison with exact value $\Psi$ and calibrated residual $R$,
\[
  \Pp(\text{flip}) \;\le\;
  \underbrace{\Pp(0 < |\Psi| \le \tau)}_{\text{boundary mass}}
  \;+\;
  \underbrace{\Pp(|R| \ge c\tau)}_{\text{residual tail}}
\]
for every $\tau>0$, with no distributional assumption.  Instantiated for ranking and for Vamana pruning, the difference residual has an exact covariance-aware second-moment identity and admits a tail bound under a joint MGF proxy.  On the inspected Cohere selected pairs, omitting empirical covariance overstates the measured difference variance by a factor of $7.4$.

\item \textbf{Frozen-trace composition} (\S\ref{sec:trace}).  A Vamana selection call is a deterministic state machine: a candidate ordering, a sequence of pruning actions, and a neighbour list.  For a frozen ordering, the approximate replay returns the exact list if and only if every candidate-level action evaluated on the frozen exact state agrees, and the first disagreement is the exact point of divergence.  Local bounds therefore compose into trace-, edge-, and path-level certificates.

\item \textbf{The representation bridge and its limits} (\S\ref{sec:bridge}).  Under an exact Gaussian oracle, Stein's lemma gives a sign--linear covariance identity, and an aligned bilinear model gives a strict correlation gain from a magnitude bit.  A rare-contamination construction proves that fixed-dimensional Gaussianity, thin-shell concentration, and an isotropic spectrum are jointly insufficient for useful exponential tails; concentration needs one of three additional ingredients, which we supply.

\item \textbf{Operational certificates and quantizer scope} (\S\ref{sec:heldout}, \S\ref{sec:instantiations}).  A held-out block certificate bounds the selective risk of a frozen quantized rule with finite-sample validity and no analytical assumption.  RaBitQ, Lucene BBQ, and product quantizers enter the same decision interface through family-specific residual mechanisms.
\end{enumerate}

\paragraph{Scope.}
The theory covers fixed candidate sets and frozen execution traces.  End-to-end recall additionally depends on candidate coverage, which requires separate graph-expansion arguments and is outside this paper.

\paragraph{Notation.}
Throughout, $d(\cdot,\cdot)$ is an exact distance or dissimilarity and $\hat{d}_Q$ its quantized approximation; $\Psi$ is a signed decision functional whose sign selects an action; $\Gamma=|\Psi|$ is the exact margin; $R$ is the calibrated residual; $c_Q>0$ is the calibration scale; and $v$ is a residual-tail parameter.  For $G\sim\mathcal N(\mu,\Sigma)$ we write $\sigma_i^2=\Sigma_{ii}$, $\rho_{ij}=\Sigma_{ij}/(\sigma_i\sigma_j)$, and $\varphi$ for the standard normal density.

% ============================================================
\section{Related Work}
\label{sec:related}

\paragraph{Binary codes and random projections.}
SimHash~\citep{charikar2002similarity} shows that random-hyperplane sign bits preserve angular similarity, with collision probability $1-\arccos\langle x,y\rangle/\pi$ for fixed vectors.  The randomness lives in the projection directions; the result does not describe the fixed coordinate axes of a learned encoder, whose per-coordinate variances differ by an order of magnitude.  RaBitQ~\citep{gao2023rabitq} gives a sharp pointwise error bound for a randomized ratio estimator of a single inner product.  Product quantization~\citep{jegou2011product,ge2014optimized} and ScaNN~\citep{guo2020accelerating} operate with learned codebooks at 4 to 64 bits.  These methods characterize individual distance estimates; our decision analysis studies differences of estimates with shared structure.

\paragraph{Graph-based nearest-neighbour search.}
HNSW~\citep{malkov2020efficient} and Vamana~\citep{subramanya2019diskann} build navigable graphs by iterated candidate sorting and diversity pruning, and QuIVer~\citep{xiao2026quiver} runs both operations on two-bit scores.  Existing theory studies navigability under random-graph or geometric assumptions.  We ask a different question, whether quantization preserves the local decisions made during construction and traversal, and answer it with a deterministic composition theorem for the selection state machine.

\paragraph{Contrastive representation geometry.}
\citet{betser2026infonce} prove asymptotic Gaussianity of fixed-dimensional projections under stated alignment and concentration assumptions, with a second regime that adds a vanishing regularizer.  We use this result to motivate an exact Gaussian oracle; approximate Gaussian diagnostics alone do not transfer its identities or tails without error control.

\paragraph{Concentration and margin conditions.}
The boundary-plus-tail split is the low-noise condition of \citet{tsybakov2004optimal} transported to algorithmic comparisons.  The residual side is supplied by sub-Gaussian and sub-gamma tail bounds \citep{vershynin2018high,boucheron2013concentration}, by Efron--Stein replacement bounds \citep{efron1981jackknife}, and by empirical Bernstein inequalities \citep{maurer2009empirical}.  The new element is the combination of role-specific quantization residuals, covariance-aware scales, and frozen algorithmic traces in one framework.

% ============================================================
\section{Decisions, Boundaries, and Residual Tails}
\label{sec:decision}

Every step of graph-based vector search reduces to the question ``is $\Psi$ positive or negative?'', where $\Psi$ is a difference of distances (ranking), a scaled comparison (pruning), or a threshold test (stopping).  Quantization replaces $\Psi$ by an approximation $\widehat\Psi_Q$, and the step fails when the two disagree in sign.  Table~\ref{tab:scope} summarizes what each result in the paper assumes and what it delivers; the rest of the paper fills in the rows.

\begin{table}[htbp]
\centering
\small
\caption{The results of the paper, the source of randomness each one conditions on, and the object each one controls.}
\label{tab:scope}
\begin{tabular}{L{2.7cm}L{2.7cm}L{3.2cm}L{3.0cm}L{3.0cm}}
\toprule
Result & Randomness & Key assumption & Controls & Does not control \\
\midrule
Boundary--residual split (Thm.~\ref{thm:main}) & any common probability space & positive calibration scale; exact ties handled separately & flip probability of one comparison & residual concentration \\
Covariance-aware tails (Props.~\ref{prop:ranking}, \ref{prop:pruning}) & quantizer or data randomness with the compared objects fixed & joint sub-Gaussian proxy and bias term & one-sided ranking and pruning flip risk & MGF control from sample covariance alone \\
Trace coupling (Thm.~\ref{thm:trace}) & none: deterministic state machine & shared permutation, append-only selection, one tie rule & equality of neighbour lists; first divergence & candidate generation, navigability, or recall \\
Gaussian oracle (Thm.~\ref{thm:stein}, Prop.~\ref{prop:magnitude}) & exact joint Gaussian representation & stated target, scorer, threshold & sign--linear covariance; aligned Pearson gain & arbitrary scorers or decision risk \\
Necessity (Thm.~\ref{thm:counterexample}) & explicit construction & none & impossibility of tails from low-order diagnostics & behavior of every real embedding \\
Selective certificate (Thm.~\ref{thm:heldout}) & i.i.d. blocks given a frozen fit & positive coverage; prespecified grid & ratio of expected failure to expected coverage & distribution shift or end-to-end recall \\
\bottomrule
\end{tabular}
\end{table}

\subsection{Setup and the main decomposition}

\begin{definition}[Decision residual and margin]
\label{def:residual}
Let $\Psi$ and $\widehat\Psi_Q$ be the exact and approximate decision functionals on a common probability space.  For a calibration constant $c_Q>0$ define
\begin{equation}
R=\widehat\Psi_Q-c_Q\Psi,\qquad \Gamma=|\Psi|.
\label{eq:residual-margin}
\end{equation}
The \emph{conservative flip event} is $\mathcal E=\{\Psi\widehat\Psi_Q\le0,\ \Psi\ne0\}$: an approximate tie against a non-tied exact decision counts as a failure.
\end{definition}

The constant $c_Q$ absorbs multiplicative distortion and is fitted by least squares on an independent sample.  A through-origin fit makes $R$ orthogonal to $\Psi$ in the sample, but it does not make $R$ mean zero; the risk bounds below therefore carry an explicit bias term, and in practice we use affine calibration.

\begin{theorem}[Boundary--residual decomposition]
\label{thm:main}
For every $\tau>0$,
\begin{equation}
\Pp(\mathcal E)
\;\le\;
\underbrace{\Pp(0<\Gamma\le\tau)}_{\text{boundary mass}}
\;+\;
\underbrace{\Pp(|R|\ge c_Q\tau)}_{\text{residual tail}}.
\label{eq:decomposition}
\end{equation}
No independence between $R$ and $\Psi$ is required.
\end{theorem}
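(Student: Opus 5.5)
The plan is a pointwise event inclusion followed by a union bound. I will show that, for every outcome in the common probability space,
\[
\mathcal E \subseteq \{0<\Gamma\le\tau\}\cup\{|R|\ge c_Q\tau\}.
\]
Monotonicity and subadditivity of $\Pp$ then give \eqref{eq:decomposition}. The joint law of $(\Psi,R)$ never enters, so no independence assumption is needed.

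To prove the inclusion, fix an outcome in $\mathcal E$ and suppose $\Gamma>\tau$. I will show that $|R|\ge c_Q\tau$. Substituting Definition~\ref{def:residual} into $\widehat\Psi_Q$ gives
\[
\Psi\widehat\Psi_Q=\Psi(c_Q\Psi+R)=c_Q\Psi^2+\Psi R.
\]
On $\mathcal E$ this quantity is at most $0$. Hence $c_Q\Psi^2\le-\Psi R\le|\Psi|\,|R|$. Since $\Psi\ne0$ on $\mathcal E$, dividing by $|\Psi|>0$ gives $|R|\ge c_Q|\Psi|=c_Q\Gamma$. With $\Gamma>\tau$ and $c_Q>0$, this yields $|R|>c_Q\tau$, so in particular $|R|\ge c_Q\tau$.

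The remaining case is immediate. If $\Gamma\le\tau$, then $\Gamma>0$ because $\Psi\ne0$ on $\mathcal E$, so the outcome lies in the boundary-mass event. Both cases are covered, which proves the inclusion.

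The only real care point is handling ties. This is why $\mathcal E$ uses the non-strict inequality $\Psi\widehat\Psi_Q\le0$ together with the exclusion $\Psi\neq0$. The exclusion lets us divide by $|\Psi|$, and it gives the strict lower bound $\Gamma>0$ in the boundary term, so exact ties contribute no mass. Measurability of the events follows from $\Psi$ and $\widehat\Psi_Q$ being random variables, so each set in the inclusion is an event.
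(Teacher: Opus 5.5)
Your proof is correct and follows the paper's argument: both establish the pointwise inclusion $\mathcal E\subseteq\{\Gamma>0,\ |R|\ge c_Q\Gamma\}$, then split on $\Gamma\le\tau$ versus $\Gamma>\tau$ and apply a union bound. The only difference is cosmetic: you get $|R|\ge c_Q\Gamma$ in one line from $c_Q\Psi^2+\Psi R\le0$, whereas the paper's appendix treats $\Psi>0$ and $\Psi<0$ separately.
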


\begin{proof}
On $\mathcal E$ the residual opposes the sign of $\Psi$ with magnitude at least $c_Q\Gamma>0$.  Split according to whether $\Gamma\le\tau$ or $\Gamma>\tau$; in the second case $|R|\ge c_Q\Gamma>c_Q\tau$.
\end{proof}

The two terms are the two failure mechanisms that fidelity metrics merge.  High average fidelity coexists with large boundary mass on a hard candidate set, and poor average fidelity is harmless when every margin of interest is large.  Rank correlation and boundary crossings are different functionals of the same joint law, which is why a Spearman coefficient computed on random pairs does not determine the flip rate on selected pairs (\S\ref{sec:experiments} gives an explicit construction in which the two are decoupled).

\begin{corollary}[Sub-Gaussian instantiation]
\label{cor:subgaussian}
If $\Pp(0<\Gamma\le\tau)\le C\tau^\beta$ for $0<\tau\le\tau_0$ and $\Pp(|R|\ge z)\le2\exp(-z^2/2v^2)$, then
\begin{equation}
\Pp(\mathcal E)\le\inf_{0<\tau\le\tau_0}\left[C\tau^\beta+2\exp\!\left(-\frac{c_Q^2\tau^2}{2v^2}\right)\right],
\label{eq:small-ball-bound}
\end{equation}
and the minimizing $\tau^\star$ scales as $(v/c_Q)[\log(1/v)]^{1/2}$.  The resulting rate is of order
\[
(v/c_Q)^\beta\,[\log(c_Q/v)]^{\beta/2}
\]
(Appendix~\ref{app:p1}).
\end{corollary}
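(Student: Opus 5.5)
The bound follows directly from Theorem~\ref{thm:main}, which holds for every $\tau>0$. For each $\tau\in(0,\tau_0]$ I substitute the small-ball hypothesis $\Pp(0<\Gamma\le\tau)\le C\tau^\beta$ into the boundary term. Into the residual term I substitute the tail hypothesis at $z=c_Q\tau$, which gives $\Pp(|R|\ge c_Q\tau)\le 2\exp(-c_Q^2\tau^2/2v^2)$. Both bounds hold simultaneously for every admissible $\tau$, and the left side of \eqref{eq:decomposition} does not depend on $\tau$. Taking the infimum over $\tau\in(0,\tau_0]$ then yields \eqref{eq:small-ball-bound}.

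For the scaling claims, write $s=v/c_Q$ and $\tau=s\,u$. The bracket becomes $C s^\beta u^\beta + 2e^{-u^2/2}$, so it suffices to balance a polynomial against a Gaussian tail in the rescaled variable $u$.

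I would choose $u^\star=\sqrt{2\beta\log(1/s)}$, which makes $2e^{-u^2/2}=2s^\beta$. With this choice the polynomial term is $C s^\beta(2\beta\log(1/s))^{\beta/2}$, which dominates the tail term $2s^\beta$ as $s\to0$. The total is therefore $O\bigl(s^\beta[\log(1/s)]^{\beta/2}\bigr)$, which is the stated rate with $\log(c_Q/v)=\log(1/s)$. It corresponds to $\tau^\star\asymp (v/c_Q)[\log(c_Q/v)]^{1/2}$.

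Two points need care.

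\emph{Feasibility.} This choice requires $\tau^\star\le\tau_0$, which holds once $v/c_Q$ is small enough. In the other regime the bound is trivial up to constants, so I would state the rate as $v/c_Q\to0$.

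\emph{Optimality.} To show the choice is near-optimal rather than merely admissible, I would set the derivative of the bracket to zero: $C\beta\tau^{\beta-1}=(2c_Q^2\tau/v^2)e^{-c_Q^2\tau^2/2v^2}$. Taking logarithms gives $u^2/2=\beta\log(1/s)+O(\log\log(1/s))$. This confirms $u^\star\sim\sqrt{2\beta\log(1/s)}$, and hence the claimed scaling of $\tau^\star$.

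The main obstacle is bookkeeping. The abstract statement writes $\log(1/v)$ for the minimizer but $\log(c_Q/v)$ for the rate. These agree only up to treating $c_Q$ as a fixed constant, so I would present everything in terms of $s=v/c_Q$ and note that equivalence explicitly.

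The other subtlety is the $O(\log\log)$ correction in the stationarity equation. It does not change the leading order but should be acknowledged. Because the upper bound only needs one admissible $\tau$, I would rely on the explicit choice for the rate and use the stationarity calculation only to justify the description of $\tau^\star$.
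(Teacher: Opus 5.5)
Your proposal is correct and follows the paper's argument. Appendix~\ref{app:p1} also plugs both hypotheses into Theorem~\ref{thm:main}, evaluates the bracket at an explicit admissible point $\tau=r\sqrt{2L}$ with $r=v/c_Q$, and imposes the same feasibility condition $\tau\le\tau_0$. It differs only in normalizing $L=\log(2/(Cr^\beta))$, which makes the tail term exactly $Cr^\beta$, where you take $L=\beta\log(1/r)$.

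Your stationarity check goes beyond what the paper proves. If you keep it, note that for $\beta<2$, the empirically relevant range, the derivative also vanishes at a tiny $u$ (a local maximum), so your expansion $u^2/2=\beta\log(1/s)+O(\log\log(1/s))$ describes only the larger root.
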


\begin{remark}[The boundary exponent belongs to the embedding]
\label{rem:beta}
Fitting the empirical margin distribution on logarithmic axes over the inspected quantile window gives $\beta=1.526$ on Cohere ($R^2=0.994$), $1.500$ on MiniLM ($R^2=0.979$), and $1.595$ on GIST ($R^2=0.989$), identical for one- and two-bit codes on the same data.  The boundary-mass term can therefore be estimated from the embedding before any quantizer is chosen, and it tells a practitioner how much residual control a given decision population will demand.
\end{remark}

\subsection{Ranking: the shared-query correlation}
\label{sec:ranking}

For a query $q$ and candidates $x,y$ with $d(q,x)<d(q,y)$, the ranking functional and its residual are
\begin{equation}
\Psi_{\mathrm{rank}}=d(q,y)-d(q,x)>0,\qquad R_{\mathrm{rank}}=\xi_y-\xi_x,
\end{equation}
where $\xi_x=\hat d_Q(q,x)-c_Qd(q,x)$ is the calibrated error on the edge $(q,x)$.  Both errors involve the same quantized query, so they share a common component.

\begin{proposition}[Covariance-aware ranking]
\label{prop:ranking}
Fix $q,x,y$ before drawing the approximation randomness, and let $(\xi_x,\xi_y)$ have mean $m$ and joint sub-Gaussian proxy matrix $K$.  The ranking residual is sub-Gaussian with scale
\begin{equation}
\nu_{\mathrm{rank}}^2=K_{xx}+K_{yy}-2K_{xy},
\label{eq:rank-variance}
\end{equation}
and for exact margin $\gamma>0$ and bias $\mu_R=\E(\xi_y-\xi_x)$,
\begin{equation}
\Pq(\text{rank flip})\le\exp\!\left[-\frac{(c_Q\gamma+\mu_R)_+^2}{2\nu_{\mathrm{rank}}^2}\right].
\label{eq:ranking-bound}
\end{equation}
\end{proposition}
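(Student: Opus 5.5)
The plan is to reduce the proposition to a one-dimensional Chernoff bound for the scalar residual $R_{\mathrm{rank}}=\xi_y-\xi_x$. I read ``joint sub-Gaussian proxy matrix $K$'' in the standard vector sense: for every $\lambda\in\R^2$,
\[
\E\exp\bigl(\langle\lambda,(\xi_x,\xi_y)-m\rangle\bigr)\le\exp\bigl(\tfrac12\lambda^\top K\lambda\bigr).
\]
Throughout, $q,x,y$ are held fixed, so all probabilities are under $\Pq$ only.

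\textbf{Step 1: the sub-Gaussian scale.} Specialise the vector MGF bound to the direction $\lambda=t(-1,1)$ with $t\in\R$. The left side becomes $\E\exp\bigl(t(R_{\mathrm{rank}}-\mu_R)\bigr)$, because $\langle(-1,1),m\rangle=\E(\xi_y-\xi_x)=\mu_R$. The right side becomes
\[
\exp\bigl(\tfrac{t^2}{2}(K_{xx}+K_{yy}-2K_{xy})\bigr)=\exp\bigl(t^2\nu_{\mathrm{rank}}^2/2\bigr).
\]
So the centred residual $R_{\mathrm{rank}}-\mu_R$ is sub-Gaussian with variance proxy $\nu_{\mathrm{rank}}^2$, which is \eqref{eq:rank-variance}. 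The point of this step is that the cross term $-2K_{xy}$ comes directly from the joint proxy, so the shared-query correlation lowers the scale rather than being discarded as it would be under separate marginal bounds.

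\textbf{Step 2: translate the flip into a residual event.} A rank flip means the quantized scores fail to keep $x$ strictly ahead of $y$, i.e.\ $\hat d_Q(q,y)\le\hat d_Q(q,x)$. This matches the conservative convention of Definition~\ref{def:residual}, in which ties count as failures. Since
\[
\widehat\Psi_{\mathrm{rank}}=\hat d_Q(q,y)-\hat d_Q(q,x)=c_Q\gamma+R_{\mathrm{rank}},
\]
the flip is exactly the event $\{R_{\mathrm{rank}}\le -c_Q\gamma\}$. Equivalently it is
\[
\{R_{\mathrm{rank}}-\mu_R\le-(c_Q\gamma+\mu_R)\}.
\]
Only this one-sided lower tail is needed, with no absolute value, which is why the bound carries no factor $2$, unlike the two-sided tail in Corollary~\ref{cor:subgaussian}.

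\textbf{Step 3: Chernoff bound and case split.} Set $s=c_Q\gamma+\mu_R$.

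If $s\le0$, then $(s)_+=0$ and the claimed bound equals $1$, so it holds trivially.

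If $s>0$, apply Markov's inequality to $\exp\bigl(-t(R_{\mathrm{rank}}-\mu_R)\bigr)$ for $t>0$, using the MGF bound of Step~1 at $-t$. This gives
\[
\Pq(\text{rank flip})\le\exp\bigl(-ts+t^2\nu_{\mathrm{rank}}^2/2\bigr),
\]
and the choice $t=s/\nu_{\mathrm{rank}}^2$ yields \eqref{eq:ranking-bound}.

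\textbf{Main obstacle.} There is no deep step; the part that needs care is bookkeeping. The sign of the bias must enter the right way: a positive $\mu_R$ pushes $\widehat\Psi$ away from the boundary, so it should enlarge the effective margin, and the form $(c_Q\gamma+\mu_R)_+$ reflects exactly that. The degenerate case $\nu_{\mathrm{rank}}^2=0$ also needs a separate line. There the MGF bound forces $R_{\mathrm{rank}}=\mu_R$ almost surely, so the flip probability is $0$ when $s>0$ and at most $1$ otherwise. This agrees with \eqref{eq:ranking-bound} under the convention $s^2/0=+\infty$ for $s>0$ and $0/0=0$.
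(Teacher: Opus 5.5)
Your proposal is correct and follows the paper's own argument: project the joint MGF bound onto the direction $u=(-1,1)^\top$ to obtain the scale $u^\top K u=K_{xx}+K_{yy}-2K_{xy}$, identify the flip with the one-sided event $\{R_{\mathrm{rank}}\le -c_Q\gamma\}$, and apply Chernoff's method. You also treat the degenerate case $\nu_{\mathrm{rank}}=0$ the same way the paper does, using Jensen to force $R_{\mathrm{rank}}=\mu_R$ almost surely.
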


The empirical cross term $(\widehat\Sigma_\xi)_{xy}$ measures shared-query covariance in the selected-pair population, where $\widehat\Sigma_\xi$ is the sample covariance of $(\xi_x,\xi_y)$.  It gives the exact second-moment identity
\begin{equation}
\widehat\Var(\xi_y-\xi_x)
=(\widehat\Sigma_\xi)_{xx}+(\widehat\Sigma_\xi)_{yy}
-2(\widehat\Sigma_\xi)_{xy}.
\label{eq:empirical-rank-variance}
\end{equation}
This empirical covariance is distinct from the joint MGF proxy $K$ in Proposition~\ref{prop:ranking}.

\begin{table}[htbp]
\centering
\caption{Empirical residual correlation on fixed 32-neighbour candidate sets (top candidate against each competitor), pooled over queries and competitors.  The independence ratio is $(\widehat\Var(\xi_x)+\widehat\Var(\xi_y))/\widehat\Var(\xi_y-\xi_x)$.}
\label{tab:covariance}
\begin{tabular}{llccc}
\toprule
Dataset & Quantizer & Residual corr.\ $\rho_{xy}$ & Independence ratio & Flip rate \\
\midrule
Cohere-768 & 1-bit & 0.794 & $4.71\times$ & 8.89\% \\
Cohere-768 & 2-bit & 0.870 & $7.42\times$ & 5.83\% \\
Cohere-768 & rotated 2-bit & 0.435 & $1.77\times$ & 8.81\% \\
MiniLM-384 & 1-bit & 0.047 & $1.05\times$ & 11.21\% \\
MiniLM-384 & 2-bit & 0.093 & $1.10\times$ & 4.19\% \\
GIST-960$^\dagger$ & 2-bit & 0.832 & $5.70\times$ & 35.83\% \\
\bottomrule
\end{tabular}

\vspace{2pt}
\footnotesize $^\dagger$The eligibility gate of \S\ref{sec:heldout} rejects this configuration (calibration slope near zero, left tail $7.8\times$ Gaussian); the row is included for comparison.
\end{table}

Two facts about this correlation matter.  First, it belongs to the selected-pair population rather than the i.i.d. endpoint model.  For i.i.d. endpoints and a single symmetric kernel, the Hoeffding decomposition bounds shared-endpoint correlation by $1/2$ and the independence ratio by $2$ (Appendix~\ref{app:p2p3}).  Second, substituting empirical covariance into the Gaussian-tail expression gives an informative plug-in diagnostic on the inspected eligible configurations: it is below one and covers the observed flip rate in each case ($0.55$ against an observed $5.8\%$ for Cohere two-bit codes), whereas omitting the cross term makes the expression equal one on every Cohere configuration.  This substitution is not the analytical assumption itself; the concentration routes of \S\ref{sec:bridge} provide tail control under their stated conditions, and the held-out route of \S\ref{sec:heldout} provides an independent alternative.  A six-bin margin-conditional scale changes predictive correlation by at most $0.002$, so we retain one global scale in the reported experiments.

\subsection{Pruning: role-specific residuals}
\label{sec:pruning}

Vamana's RobustPrune rule declares candidate $c$ dominated by an already selected neighbour $s$ of target $t$ when
\begin{equation}
\Psi_\alpha(t,c,s)=d(t,c)-\alpha\,d(s,c)\;\ge\;0,\qquad \alpha\ge1.
\end{equation}
The residual combines two distances that share the node $c$ rather than a query:
\begin{equation}
Z_\alpha=\xi_{tc}-\alpha\,\xi_{sc},\qquad
\nu_\alpha^2=K_{tc,tc}+\alpha^2K_{sc,sc}-2\alpha K_{tc,sc}.
\end{equation}
A shared additive intercept in the edge calibration cancels in the ranking difference but survives in $Z_\alpha$ as $(1-\alpha)b$, so pruning needs its own bias term.

\begin{proposition}[Fixed-triple pruning]
\label{prop:pruning}
Fix the triple $(t,c,s)$ before drawing the approximation randomness, and let $Z_\alpha$ have joint sub-Gaussian scale $\nu_\alpha$ and mean $\mu_Z$.  For exact margin $\gamma=|\Psi_\alpha|>0$:
\begin{itemize}
\item a \emph{false keep} (missed prune, $\Psi_\alpha>0$) has probability at most $\exp[-(c_Q\gamma+\mu_Z)_+^2/(2\nu_\alpha^2)]$;
\item a \emph{false prune} ($\Psi_\alpha<0$) has the same form with effective margin $c_Q\gamma-\mu_Z$;
\item at a \emph{structural tie} ($\Psi_\alpha=0$) disagreement is the event $Z_\alpha<0$, whose probability must be carried explicitly; a no-atom condition on $Z_\alpha$ does not control it.
\end{itemize}
\end{proposition}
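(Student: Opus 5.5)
The plan is to reduce each bullet to a one-sided sub-Gaussian Chernoff bound on the centred residual $Z_\alpha-\mu_Z$, in the same way as for Proposition~\ref{prop:ranking}. First I record the algebraic identity for the approximate pruning functional. Writing $\hat d_Q(u,c)=c_Q d(u,c)+\xi_{uc}$ on each of the two edges, linearity gives
\[
\widehat\Psi_\alpha=\hat d_Q(t,c)-\alpha\hat d_Q(s,c)=c_Q\Psi_\alpha+Z_\alpha .
\]
So the residual of Definition~\ref{def:residual} for this functional is exactly $Z_\alpha$. If the edge calibration is affine with a shared intercept $b$, then $b$ enters $Z_\alpha$ as $(1-\alpha)b$, and I absorb it into $\mu_Z=\E Z_\alpha$. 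The joint sub-Gaussian hypothesis on $(\xi_{tc},\xi_{sc})$ with proxy $K$ means that for $w=(1,-\alpha)$ we have $\E\exp(\lambda\, w^\top(\xi-\E\xi))\le\exp(\lambda^2 w^\top Kw/2)$. This yields the stated scale $\nu_\alpha^2=K_{tc,tc}+\alpha^2K_{sc,sc}-2\alpha K_{tc,sc}$, and that is the only covariance-aware step.

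\textbf{False keep.} Here $\Psi_\alpha=\gamma>0$. The exact rule prunes, and the approximate rule fails to prune when $\widehat\Psi_\alpha<0$. Counting approximate ties conservatively as in $\mathcal E$, the failure event is $\widehat\Psi_\alpha\le0$. By the identity this is $Z_\alpha-\mu_Z\le-(c_Q\gamma+\mu_Z)$. If $c_Q\gamma+\mu_Z\le0$, the claimed bound equals $1$ and is trivial. Otherwise the Chernoff bound, optimised at $\lambda=(c_Q\gamma+\mu_Z)/\nu_\alpha^2$, gives $\exp[-(c_Q\gamma+\mu_Z)^2/(2\nu_\alpha^2)]$.

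\textbf{False prune.} Here $\Psi_\alpha=-\gamma$. The failure is $\widehat\Psi_\alpha\ge0$, that is, $Z_\alpha-\mu_Z\ge c_Q\gamma-\mu_Z$. The upper-tail Chernoff bound gives the same form with effective margin $(c_Q\gamma-\mu_Z)_+$. This asymmetry is why the bias enters with opposite signs in the two bullets.

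\textbf{Structural tie.} Here $\Psi_\alpha=0$, so $\widehat\Psi_\alpha=Z_\alpha$ and the exact rule prunes, since $\Psi_\alpha\ge0$. Disagreement is therefore exactly $\{Z_\alpha<0\}$. The only thing to argue is the negative claim. The margin is zero, so the Chernoff bound degenerates to $1$. A no-atom condition only rules out $\Pp(Z_\alpha=0)>0$ and says nothing about $\Pp(Z_\alpha<0)$. A centred symmetric continuous $Z_\alpha$ already gives $1/2$. This case is also excluded from $\mathcal E$ by Definition~\ref{def:residual}, so it is not covered by Theorem~\ref{thm:main}, and its probability must be carried as a separate term.

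\textbf{Main obstacle.} I expect the main difficulty to be bookkeeping rather than analysis: getting the sign of $\mu_Z$ right in each direction, and using the fixed-triple hypothesis. That hypothesis makes $\gamma$ deterministic, so no conditioning on a selected (data-dependent) triple is needed and the MGF bound applies directly.
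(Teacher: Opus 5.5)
Your proposal is correct and follows essentially the same route as the paper's proof in Appendix~\ref{app:pruning}: the identity $\widehat\Psi_\alpha=c_Q\Psi_\alpha+Z_\alpha$, the scale $\nu_\alpha^2=w^\top Kw$ with $w=(1,-\alpha)^\top$, one-sided Chernoff bounds on $Z_\alpha-\mu_Z$ at effective margins $c_Q\gamma+\mu_Z$ (lower tail) and $c_Q\gamma-\mu_Z$ (upper tail), and the observation that at a structural tie the exact rule prunes, so disagreement is exactly $\{Z_\alpha<0\}$, which a no-atom condition cannot bound. Two small differences from the paper are harmless: your symmetric continuous example with probability $1/2$ gives a concrete witness for the last claim, which the paper states only verbally, and you bound the larger event $\{\widehat\Psi_\alpha\le0\}$ rather than the paper's strict $\{\widehat\Psi_\alpha<0\}$ for a false keep.
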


The two orientations behave very differently on real data: across all sampled configurations in \S\ref{sec:exp-trace}, every observed pruning disagreement is a false keep.

\subsection{Fixed top-$K$ on a frozen candidate set}
\label{sec:topk}

For a candidate set $C$ frozen before approximate scoring, with exact top-$K$ subset $A\subset C$,
\begin{equation}
\Pq(\widehat A\ne A)\le\sum_{x\in A}\sum_{y\in C\setminus A}\Pq(\text{rank flip on }(x,y)),
\label{eq:topk-union}
\end{equation}
and any deterministic sufficient comparison set may replace the full cross product.

% ============================================================
\section{From Local Decisions to Algorithmic Traces}
\label{sec:trace}

A Vamana selection call makes dozens of comparisons in sequence, each depending on the outcome of earlier ones.  This section shows that local bounds compose into a statement about the whole call once the exact execution state is frozen.

\subsection{The selection call as a state machine}

Fix a target $t$, a candidate permutation $\pi=(c_{(1)},\ldots,c_{(N)})$ sorted by exact distance with deterministic tie-breaking, a degree budget $M$, and $\alpha\ge1$.  Starting from $S=()$, scan $\pi$ until it is exhausted or $|S|=M$.  At candidate $c$ form the candidate-level action
\begin{equation}
D(c;S)=\bigvee_{s\in S}\mathbf 1\{d(t,c)-\alpha\,d(s,c)\ge0\},
\end{equation}
and append $c$ exactly when $D(c;S)=0$.  This is the standard non-saturated RobustPrune call.  Any deterministic refill applied afterwards is a separate post-processing map.

\begin{definition}[Semantic trace]
\label{def:trace}
The semantic trace $T$ records the shared permutation, each visited candidate, its selected prefix, its candidate-level action, and the termination state.  It does not record implementation-dependent short-circuit order among witnesses.
\end{definition}

\subsection{The coupling theorem}

\begin{theorem}[Trace coupling]
\label{thm:trace}
Let the approximate replay use the same candidate permutation, unique candidate labels, degree budget, and tie convention.  For each exact visited candidate $j$, evaluate the approximate candidate-level action on the frozen exact prefix $S_j$ and let $E_j$ be the event that it differs from the exact action.  Then, for the append-only non-saturated call,
\begin{equation}
\{\widehat T\ne T\}=\{\widehat S_{\mathrm{out}}\ne S_{\mathrm{out}}\}=\bigcup_{j\in\mathcal V}E_j,
\label{eq:trace-identity}
\end{equation}
and consequently
\begin{equation}
\Pp(\widehat S_{\mathrm{out}}\ne S_{\mathrm{out}})
\le\sum_{j\in\mathcal V}\sum_{s\in S_j}\Pp(\text{triple action disagreement at }(j,s)).
\label{eq:trace-bound}
\end{equation}
The second inequality is generally strict, because several witnesses enter one candidate-level OR.
\end{theorem}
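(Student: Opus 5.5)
The plan is an induction along the shared permutation. Let $\mathcal V$ be the exact visited candidates in scan order $j_1,j_2,\ldots$, with exact prefixes $S_{j_1}=(),S_{j_2},\ldots$. Let $\widehat S_j$ be the prefix the approximate replay holds when it reaches position $j$. The first claim is a deterministic invariant: if $E_{j'}$ fails for every $j'<j$ in $\mathcal V$, then the approximate replay visits $j$ and $\widehat S_j=S_j$. Under that invariant, the approximate action at $j$ is evaluated on exactly the frozen exact prefix. This is where unique labels and the common tie convention are needed: they make ``the approximate action on $S_j$'' the same object as the action the replay actually takes. The induction step is then immediate. On $E_j^c$ the approximate action equals the exact one, so both either append $c_{(j)}$ or skip it. The prefixes stay equal, the counters $|S|$ stay equal, and so the termination test ($|S|=M$ or exhaustion of $\pi$) fires at the same position in both runs.

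From the invariant I get both inclusions of \eqref{eq:trace-identity}.

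\emph{If no $E_j$ occurs.} The two runs make identical actions at every visited candidate and terminate identically. The semantic traces coincide, so $\widehat T=T$ and hence $\widehat S_{\mathrm{out}}=S_{\mathrm{out}}$.

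\emph{If some $E_j$ occurs.} Take the first such $j$. By the invariant the approximate replay reaches $j$ with prefix $S_j$, so $\widehat T\ne T$ already at $j$. To upgrade this to $\widehat S_{\mathrm{out}}\ne S_{\mathrm{out}}$ I use append-only monotonicity and the fact that candidates appear once in $\pi$. There are two cases.
\begin{itemize}
\item If exact skips $c_{(j)}$ and the approximate replay appends it, then $c_{(j)}\in\widehat S_{\mathrm{out}}$, because nothing is ever removed. But $c_{(j)}\notin S_{\mathrm{out}}$, since it is never revisited.
\item If exact appends $c_{(j)}$ and the approximate replay skips it, then $c_{(j)}\in S_{\mathrm{out}}$ but $c_{(j)}\notin\widehat S_{\mathrm{out}}$.
\end{itemize}
One subtlety needs checking: $j$ was actually visited by the exact run, so the budget was not yet full there, and the divergent append is therefore genuinely executed. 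This is where ``non-saturated'' enters.

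The first disagreement is thus the point of divergence. Together with the trivial implication $\widehat S_{\mathrm{out}}\ne S_{\mathrm{out}}\Rightarrow\widehat T\ne T$ (the output is part of the trace), this closes the chain of equalities.

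For \eqref{eq:trace-bound}, I apply a union bound over $j\in\mathcal V$ to the identity. Then I observe that a disagreement of the OR $D(c;S_j)$ forces disagreement of at least one of its disjuncts $\mathbf 1\{\Psi_\alpha(t,c,s)\ge0\}$ with $s\in S_j$. If every disjunct agreed, so would the OR. A second union bound over $s\in S_j$ gives the double sum. Strictness is shown by example: two witnesses can both flip while the OR is unchanged, or both flip simultaneously, so the sum overcounts. Because $\mathcal V$ and the $S_j$ are determined by the exact run, the index sets are nonrandom with respect to the approximation randomness, which keeps the summation legitimate. If the exact quantities are themselves random, I would condition on them first.

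I expect the main obstacle to be the bookkeeping in the invariant rather than any probability. I must define precisely what the replay does at positions after its own termination, and ensure that approximate termination cannot occur before some exact-visited $j$ without an earlier $E_{j'}$. That follows because $|\widehat S|=|S|$ up to that point. The trace is also defined without witness order, which is exactly what makes candidate-level agreement suffice.
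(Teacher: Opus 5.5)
Your proposal is correct and follows the paper's proof. The paper uses the same three steps: induction from the empty prefix when no $E_j$ occurs; the first divergence together with append-only selection and unique labels otherwise; and the inclusion of $E_j$ in the union over $s\in S_j$ of triple-level disagreements, followed by a double union bound.

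One terminological correction: in the paper ``non-saturated'' means that no refill map is applied after the scan. It therefore enters at your ``never revisited'' step, because a refill could later append the skipped candidate and make distinct traces give the same output. The budget check you attribute to it follows from the scan's stopping rule alone.
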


\begin{proof}
If no $E_j$ occurs, induction from the empty prefix makes states, actions, and termination identical.  Otherwise let $j_\star$ be the first occurring event.  Both executions have the same prefix and visit the same candidate before $j_\star$, so $E_{j_\star}$ is an actual action disagreement; that candidate is appended in exactly one run and, since selection is append-only with unique labels, the final sets differ.
\end{proof}

The identity is deterministic for a shared permutation.  When approximate scoring also changes the permutation, its disagreement event is added before applying the theorem conditionally on equal order.  Probability enters only through the local action bounds; no independence across comparisons is used anywhere.

\subsection{Dependency slicing, edge certificates, and what they show}

Full-trace equality is often more than an application needs.  For a single diversity-selected edge, a backward slice keeps only the sorting and pruning atoms sufficient for that edge's survival.

\begin{corollary}[Edge and path certificates]
\label{cor:path}
For an exact output edge $e$ selected at scan position $j_e$, agreement of all candidate actions in the exact prefix through $j_e$ is sufficient for $e$ to survive.  For a fixed path $P$ composed of such edges,
\begin{equation}
\Pp(\text{some edge of }P\text{ is lost})\le\sum_{e\in P}\sum_{j\le j_e}\sum_{s\in S_j}\Pp(\text{triple action disagreement at }(j,s)).
\end{equation}
The condition $M_{\min}^2>2\log N_P$, with $N_P$ the number of listed terms and $M_{\min}$ the smallest standardized margin among them, indicates when the additive bound can stay below one.
\end{corollary}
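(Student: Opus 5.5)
The plan is to reduce Corollary~\ref{cor:path} to Theorem~\ref{thm:trace} applied to a truncated execution, and then take a union bound over the edges of $P$.

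\textbf{Single edge.} Fix an exact output edge $e=(t,c_{(j_e)})$ selected at scan position $j_e$. Restrict attention to the visited prefix $\mathcal V_{\le j_e}=\{j\in\mathcal V:j\le j_e\}$. I will rerun the induction from the proof of Theorem~\ref{thm:trace} on this prefix only. Suppose no $E_j$ with $j\le j_e$ occurs. Then the approximate and exact runs have identical selected prefixes $S_j$ for every $j\le j_e$, since both start from $S=()$, share the permutation, and take the same append action at each step.

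There is one subtlety: the approximate run might terminate before reaching $j_e$. This can happen only if $|\widehat S|=M$ earlier. But the prefixes agree, and the exact run still had room at $j_e$, so this does not occur.

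At $j_e$ the exact action is $D=0$, and by assumption the approximate action on the frozen prefix $S_{j_e}$ agrees. So $c_{(j_e)}$ is appended in the approximate run. Selection is append-only, so later steps cannot remove it, and $e$ survives. Any refill post-processing is excluded, because the statement concerns the non-saturated call.

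This shows $\{e\text{ lost}\}\subseteq\bigcup_{j\le j_e}E_j$. Each $E_j$ is a disagreement of a candidate-level OR over witnesses $s\in S_j$. If the OR values differ, then some individual triple indicator $\mathbf 1\{d(t,c)-\alpha d(s,c)\ge0\}$ must differ between the exact and approximate evaluations. Hence $E_j\subseteq\bigcup_{s\in S_j}\{\text{triple disagreement at }(j,s)\}$. A union bound gives the single-edge certificate.

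\textbf{Path.} For a fixed path $P$, the event that some edge is lost is the union of the per-edge loss events. Different edges of $P$ generally come from different selection calls, with different targets $t$, so each inclusion above is applied within its own call. A union bound over $e\in P$ then yields the displayed triple sum. No independence is used, and repeated terms are harmless because the bound is only an upper bound.

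\textbf{Threshold remark.} Bound each term using Proposition~\ref{prop:pruning}. Write the standardized margin as $M_i=(c_Q\gamma_i\pm\mu_Z)_+/\nu_{\alpha,i}$, so each term is at most $\exp(-M_i^2/2)$ (the orientation-specific bias is absorbed into $M_i$). The sum is then at most $N_P\exp(-M_{\min}^2/2)$. This is less than $1$ exactly when $M_{\min}^2>2\log N_P$.

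\textbf{Main obstacle.} The delicate step is the inclusion for a single edge, specifically making sure that truncating at $j_e$ loses nothing. The appended edge must not depend on actions after $j_e$, which follows from append-only selection. Early termination must also be excluded, which follows from budget equality on identical prefixes. Once the truncated version of the induction in Theorem~\ref{thm:trace} is written out carefully, the rest is union bounds.
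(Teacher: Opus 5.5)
Your proposal is correct and follows the paper's argument. The paper also gets edge survival from the first-divergence induction of Theorem~\ref{thm:trace} truncated at the selection position, passes from candidate-level to triple-level disagreements via the witness inclusion $E_j\subseteq\bigcup_{s\in S_j}A_{j,s}$, and union-bounds the source-scoped prefix certificates over the edges of $P$ without cross-source independence; your derivation of the threshold from $N_P\exp(-M_{\min}^2/2)<1$ using the sub-Gaussian pruning bounds is the natural reading of the paper's informal ``indicates'' remark.
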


In the non-saturated experiments of \S\ref{sec:exp-trace}, an edge's prefix certificate contains 21 to 25 triple comparisons on average, and the plug-in additive bound saturates at one on 89 to 92\% of tested edges.  The certificate nevertheless carries information in two ways.  Its logical half is exact: in every tested case a passing certificate implied edge survival.  Its numerical half remains a useful ranking: the plug-in risk sum orders edges by observed failure with Spearman correlation $0.81$, $0.69$, and $0.73$ on Cohere, MiniLM, and GIST.

\begin{remark}[Failure concentrates on few decisions]
\label{rem:topr}
The union bound weights all decisions equally, but query-level failure is driven by a few of them.  Ranking the 31 comparisons of a candidate set by individual risk and summing only the $r$ largest improves the correlation with observed query failure up to $r=3$ to $5$ (Cohere two-bit $0.471\to0.480$; MiniLM one-bit $0.460\to0.500$; MiniLM rotated two-bit reaches $0.589$), after which further terms add noise.  The full sum remains the valid upper bound; the top-$r$ score is an empirical predictor, and we keep the two named separately.
\end{remark}

% ============================================================
\section{From Representations to Residual Laws}
\label{sec:bridge}

Sections~\ref{sec:decision} and~\ref{sec:trace} are quantizer-agnostic: any source of residual control feeds them.  This section develops the analytically richest instance, coordinate-preserving one- and two-bit codes on representations with Gaussian coordinate structure, and locates the boundary of that analysis.

\subsection{The Gaussian model and its empirical support}

\begin{assumption}[Coordinate Gaussianity]
\label{asm:H1}
Let $G\in\R^D$ be the encoder output before $L_2$ normalization.
\begin{enumerate}
\item[(a)] For any fixed $k$ coordinates $I$, $d_{\mathrm{BL}}(\mathcal L(G_I),\mathcal L(Z_I))\le\varepsilon_D$ with $Z\sim\mathcal N(\mu,\Sigma)$.
\item[(b)] Thin shell: $\E\bigl|\|G\|-r_D\bigr|/r_D\le\delta_D$ with $\delta_D\to0$.
\end{enumerate}
\end{assumption}

\citet{betser2026infonce} prove asymptotic Gaussianity of fixed-dimensional projections under alignment and concentration assumptions on the InfoNCE objective.  Empirically, all eleven learned representations we inspected, produced by eight different encoders, have coordinate-wise QQ-plot $R^2\ge0.9959$ and norm coefficient of variation at most $0.09$; the classical GIST and SIFT descriptors fail both diagnostics.  The joint law is a stronger statement than these marginal checks, and \S\ref{sec:exp-eligibility} measures how far each dataset satisfies it.  The identities below are exact for the Gaussian model and are used as an oracle; the held-out route of \S\ref{sec:heldout} covers representations for which the model is not credible.

\subsection{The Stein covariance identity}

Let $G\sim\mathcal N(\mu,\Sigma)$, $S_0=\sum_i\sign(G_i)$, and $T_0=\sum_jd_jG_j$.

\begin{theorem}[Stein covariance identity]
\label{thm:stein}
\begin{equation}
\Cov(S_0,T_0)=\sum_{i,j}a_i\,\Sigma_{ij}\,d_j,\qquad a_i=\frac{2\varphi(\mu_i/\sigma_i)}{\sigma_i}.
\end{equation}
\end{theorem}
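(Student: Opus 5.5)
By bilinearity, $\Cov(S_0,T_0)=\sum_{i,j}d_j\Cov(\sign(G_i),G_j)$, so the whole proof reduces to the single pair identity
\[
\Cov(\sign(G_i),G_j)=a_i\,\Sigma_{ij},\qquad a_i=\frac{2\varphi(\mu_i/\sigma_i)}{\sigma_i}.
\]
Everything is finite: $\sign(G_i)$ is bounded and $G_j$ is Gaussian.

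We handle the pair $(G_i,G_j)$ by Gaussian regression rather than invoking a multivariate Stein lemma directly. Since $\sign$ is not differentiable, the regression route is cleaner. Assume $\sigma_i>0$, which the statement needs for $a_i$ to be defined. Write
\[
G_j=\mu_j+\frac{\Sigma_{ij}}{\sigma_i^2}(G_i-\mu_i)+W,
\]
where $W$ is mean-zero Gaussian and independent of $G_i$; joint Gaussianity gives this independence. Then
\[
\Cov(\sign(G_i),G_j)=\frac{\Sigma_{ij}}{\sigma_i^2}\,\Cov\bigl(\sign(G_i),G_i\bigr),
\]
because the $W$ term contributes zero covariance. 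This step is where joint Gaussianity, and not merely marginal Gaussianity, is used.

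It remains to compute the one-dimensional quantity $\Cov(\sign(G_i),G_i)=\E[\sign(G_i)(G_i-\mu_i)]$. We have two routes.

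\begin{itemize}
\item \textbf{Stein's lemma.} Use the univariate identity $\E[f(X)(X-\mu)]=\sigma^2\E[f'(X)]$ with the distributional derivative $\sign'=2\delta_0$. This gives $\sigma_i^2\cdot 2p_{G_i}(0)=\sigma_i^2\cdot 2\varphi(\mu_i/\sigma_i)/\sigma_i$.
\item \textbf{Direct integration.} Substitute $G_i=\mu_i+\sigma_i Z$. Then $\E[\sign(\mu_i+\sigma_i Z)\,\sigma_i Z]=\sigma_i\bigl(\E[Z\mathbf 1\{Z>-m\}]-\E[Z\mathbf 1\{Z<-m\}]\bigr)$ with $m=\mu_i/\sigma_i$. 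Since $\int_{-m}^\infty z\varphi(z)\,dz=\varphi(m)$ and, by symmetry, the other piece equals $-\varphi(m)$, this evaluates to $2\sigma_i\varphi(m)$.
\end{itemize}

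I will present the direct-integration route in the write-up, so the delta function needs no justification. Combining with the regression step gives $\Cov(\sign(G_i),G_j)=\Sigma_{ij}\cdot 2\varphi(\mu_i/\sigma_i)/\sigma_i=a_i\Sigma_{ij}$. Summing over $i,j$ with weights $d_j$ then yields the theorem.

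The main obstacle is minor: the nonsmoothness of $\sign$, together with the degenerate case $\sigma_i=0$. In that case $\sign(G_i)$ is almost surely constant, the covariance is $0$, and the formula is interpreted with $a_i\Sigma_{ij}=0$ since $\Sigma_{ij}=0$. The tie convention $\sign(0)$ is irrelevant because $G_i=0$ has probability zero. Finally, I will remark that the off-diagonal terms $\Sigma_{ij}$ with $i\ne j$ are exactly how cross-coordinate covariance enters the ranking signal, as the paper advertises.
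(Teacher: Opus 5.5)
Your proposal is correct and is essentially the paper's proof. It uses bilinearity and then the Gaussian regression of $G_j$ on $G_i$; your independent-residual decomposition is the same as the paper's conditional-mean step $\E[G_j\mid G_i]=\mu_j+(\Sigma_{ij}/\sigma_i^2)(G_i-\mu_i)$, and it reduces everything to $\Cov(\sign(G_i),G_i)=2\sigma_i\varphi(\mu_i/\sigma_i)$. The only differences are that you verify this one-dimensional value by direct integration and handle the degenerate case $\sigma_i=0$, whereas the paper states the value without computation.
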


\begin{proof}
By bilinearity $\Cov(S_0,T_0)=\sum_{i,j}d_j\Cov(\sign(G_i),G_j)$.  For jointly Gaussian $(G_i,G_j)$ the conditional mean is $\E[G_j\mid G_i]=\mu_j+(\Sigma_{ij}/\sigma_i^2)(G_i-\mu_i)$, whence
\[
\Cov(\sign(G_i),G_j)=\frac{\Sigma_{ij}}{\sigma_i^2}\cdot2\sigma_i\varphi(\mu_i/\sigma_i)=a_i\Sigma_{ij}.
\]
\end{proof}

The full matrix $\Sigma$ enters, not only its diagonal, and the off-diagonal contribution is measured by the Frobenius energy $\mathcal I_{\mathrm{off}}=\sum_{i\ne j}(a_i\Sigma_{ij})^2$.  Table~\ref{tab:fidelity} shows that this term carries 31 to 38\% of the predicted ranking signal even though individual coordinate correlations are small (mean $|\rho_{ij}|$ between $0.04$ and $0.11$): with $|\rho_{ij}|\asymp\kappa/\sqrt D$ the sum $\sum_{i\ne j}\rho_{ij}^2\asymp\kappa^2D$ is of constant order relative to the diagonal.

\begin{table}[htbp]
\centering
\caption{Spearman ranking fidelity $F$ of the one-bit code, measured and predicted from the Gaussian model with the full covariance and with its diagonal only.  Gap explained is $(F_{\mathrm{full-}\Sigma}-F_{\mathrm{diag}})/(F_{\mathrm{actual}}-F_{\mathrm{diag}})$.}
\label{tab:fidelity}
\begin{tabular}{lcccc}
\toprule
Dataset & $F_{\mathrm{actual}}$ & $F_{\mathrm{full-}\Sigma}$ & $F_{\mathrm{diag}}$ & Gap explained \\
\midrule
Cohere & 0.681 & 0.688 & 0.474 & 103\% \\
Arxiv & 0.897 & 0.886 & 0.546 & 97\% \\
CodeSearch & 0.823 & 0.837 & 0.542 & 105\% \\
Random & 0.907 & 0.899 & 0.560 & 98\% \\
\bottomrule
\end{tabular}
\end{table}

\subsection{The magnitude bit under an aligned bilinear model}

Let $G,H$ be independent with independent coordinates $G_i,H_i\sim\mathcal N(0,\sigma_i^2)$, and for a deterministic threshold $\tau>0$ define
\begin{equation}
T=\sum_iG_iH_i,\qquad S_1=\sum_i\sign(G_i)\sign(H_i),\qquad S_2=\sum_iq_i(G)q_i(H),
\end{equation}
with $q_i(g)=\sign(g_i)(1+\mathbf 1\{|g_i|>\tau\})$.

\begin{proposition}[Aligned magnitude-bit gain]
\label{prop:magnitude}
For every $D\ge1$, every $\sigma_i>0$, and every finite $\tau>0$, the Pearson correlations satisfy $\rho(S_2,T)>\rho(S_1,T)$.
\end{proposition}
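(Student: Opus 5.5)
The plan is to reduce both Pearson correlations to explicit one-dimensional Gaussian moments, using the independence of coordinates and of $G$ and $H$, and then compare them coordinate by coordinate. Write $Z\sim\mathcal N(0,1)$, $t_i=\tau/\sigma_i$, $e_i=e^{-t_i^2/2}$ and $p_i=\Pp(|Z|>t_i)$.

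\textbf{Reduction to moments.} Each summand $q_i(G)q_i(H)$ and $G_iH_i$ has mean zero, because $q_i$ is odd and $G_i$ is symmetric. The summands are independent across $i$. Hence
\[
\Cov(S_2,T)=\sum_i\bigl(\E[q_i(G)G_i]\bigr)^2,\qquad
\Var(S_2)=\sum_i\bigl(\E[q_i(G)^2]\bigr)^2,\qquad
\Var(T)=\sum_i\sigma_i^4,
\]
and the same formulas hold for $S_1$ with $q_i$ replaced by $\sign$. The required moments are:
\begin{itemize}
\item $\E|G_i|=\sigma_i\sqrt{2/\pi}$;
\item $\E[|G_i|\mathbf 1\{|G_i|>\tau\}]=\sigma_i\sqrt{2/\pi}\,e_i$, so $\E[q_i(G)G_i]=\sigma_i\sqrt{2/\pi}\,(1+e_i)$;
\item $\E[q_i(G)^2]=1+3p_i=:b_i$, while $\E[\sign(G_i)^2]=1$.
\end{itemize}
This gives
\[
\rho(S_1,T)=\frac{(2/\pi)\sum_i\sigma_i^2}{\sqrt{D\sum_i\sigma_i^4}},\qquad
\rho(S_2,T)=\frac{(2/\pi)\sum_i\sigma_i^2(1+e_i)^2}{\sqrt{\sum_i b_i^2\,\sum_i\sigma_i^4}}.
\]
The factor $\sqrt{\sum_i\sigma_i^4}$ cancels, so the claim reduces to
\[
\frac{\sum_i\sigma_i^2(1+e_i)^2}{\sqrt{\sum_ib_i^2}}>\frac{\sum_i\sigma_i^2}{\sqrt D}.
\]

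\textbf{Scalar inequality.} The first key step is the one-variable inequality
\[
(1+e^{-t^2/2})^2>1+3\,\Pp(|Z|>t)\qquad\text{for all } t>0.
\]
Both sides equal $4$ at $t=0$ and tend to $1$ as $t\to\infty$. I would prove it through the function $h(t)=2e+e^2-3p$, with $e=e^{-t^2/2}$ and $p=\Pp(|Z|>t)$, which satisfies $h(0)=0$ and $h(\infty)=0$. The derivative is
\[
h'(t)=-2te-2te^2+3\sqrt{2/\pi}\,e=e\bigl(3\sqrt{2/\pi}-2t-2te\bigr).
\]
The bracket is strictly decreasing in $t$, so $h'$ changes sign exactly once, from positive to negative. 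A function that vanishes at both ends and is unimodal in this sense is strictly positive on $(0,\infty)$. This already settles the homogeneous case $\sigma_i\equiv\sigma$: there $b_i=b$ and $e_i=e$ for all $i$, and the reduced inequality is exactly $(1+e)^2/b>1$.

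\textbf{Aggregation (main obstacle).} For heterogeneous $\sigma_i$ the weights $b_i\in(1,4)$ vary with $i$, and Cauchy--Schwarz runs the wrong way for a naive termwise argument. My first attempt is to strengthen the scalar bound to
\[
(1+e_i)^2\ge\sqrt{b_i\,\bar b},\qquad \bar b=\tfrac1D\textstyle\sum_jb_j,
\]
or, failing that, to exploit the common threshold. All $t_i=\tau/\sigma_i$ share the same $\tau$, so $b_i$ and $(1+e_i)^2$ are both increasing in $\sigma_i$, and the ratio $(1+e_i)^2/b_i$ exceeds $1$. I would then apply Chebyshev's sum inequality to the similarly ordered sequences $\sigma_i^2$ and $(1+e_i)^2$, and compare $\sqrt{\sum_i b_i^2}$ with $\sqrt D\,\max_i$ of the relevant ratio. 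If neither route closes in full generality, I would verify the reduced inequality directly as a function of $(\sigma_i)$, by showing that its minimum over the simplex of normalised variances is attained at equal $\sigma_i$, where the scalar inequality applies.
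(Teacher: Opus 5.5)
Your reduction to one-dimensional moments and your scalar inequality $(1+e^{-t^2/2})^2>1+3\,\Pp(|Z|>t)$, including the single-sign-change argument for $h'$, are exactly the paper's first two steps. The gap is the aggregation, and none of your three routes closes it. Take $D=2$ and let $\sigma_1\to0$, $\sigma_2\to\infty$, so that $(1+e_1)^2,b_1\to1$ and $(1+e_2)^2,b_2\to4$; by continuity, the failures below persist for suitable finite $\sigma_i$.

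\begin{itemize}
\item Your strengthened bound $(1+e_1)^2\ge\sqrt{b_1\bar b}$ then reads $1\ge\sqrt{5/2}$, which is false.
\item Chebyshev applied to $\sigma_i^2$ and $(1+e_i)^2$ yields only the unweighted mean $\frac1D\sum_i(1+e_i)^2$. You would then need $\frac1D\sum_i(1+e_i)^2>\bigl(\frac1D\sum_ib_i^2\bigr)^{1/2}$, which in the limit is $5/2>\sqrt{17/2}$, again false. Yet the reduced inequality itself holds there: the ratio of its two sides tends to $4\sqrt2/\sqrt{17}>1$. What your Chebyshev step discards is exactly the $\sigma_i^2$-weighting, which puts its mass on the coordinates where $b_i$ is large.
\item The fallback, that the minimum over normalised variances is attained at equal $\sigma_i$, is an unproven global optimisation claim over $(\sigma_i)$ at fixed $\tau$. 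It restates the whole difficulty rather than resolving it.
\end{itemize}

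The missing idea is that the weights $w_i=\sigma_i^2$ grow faster than $b_i$. Once your scalar bound gives $\sum_iw_i(1+e_i)^2>\sum_iw_ib_i$, it suffices to show $\frac{\sum_iw_ib_i}{\sum_iw_i}\ge\bigl(\frac1D\sum_ib_i^2\bigr)^{1/2}$.

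Regard $b$ as a function of $w$ at fixed $\tau$, namely $b(w)=1+3\,\Pp(|Z|>\tau/\sqrt w)$. With $t=\tau/\sqrt w$ one has $w\,b'(w)=3t\varphi(t)\le3\varphi(1)<1\le b(w)$, so both $b(w)$ and $w/b(w)$ are increasing. Hence, for all $i,j$,
\[
(w_ib_j-w_jb_i)(b_i-b_j)=b_ib_j\,(w_i/b_i-w_j/b_j)(b_i-b_j)\ge0.
\]
Summing over $i,j$ gives $\frac{\sum_iw_ib_i}{\sum_iw_i}\ge\frac{\sum_ib_i^2}{\sum_ib_i}$. This is Chebyshev's inequality with weights $b_i$ on the similarly ordered sequences $w_i/b_i$ and $b_i$, not on $\sigma_i^2$ and $(1+e_i)^2$.

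Cauchy--Schwarz, $(\sum_ib_i)^2\le D\sum_ib_i^2$, then gives $\frac{\sum_ib_i^2}{\sum_ib_i}\ge\bigl(\frac1D\sum_ib_i^2\bigr)^{1/2}$. This is precisely the paper's aggregation step, and with it your proof is complete.
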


The closed forms and the proof are in Appendix~\ref{app:p5}.  The alignment between target and scorer is essential: in a linear-target model whose scorer weights do not match the target weights, the same second bit \emph{lowers} the correlation (by $0.163$ in the eight-dimensional example of Appendix~\ref{app:p5}).  More code information improves the best readout, not every fixed readout.  Empirically the second bit raises global ranking fidelity by $+0.071$ to $+0.132$ on the contrastive embeddings we inspected, and on Cohere it brings the $3\sigma$ residual survival ratio from $2.09$ times Gaussian to $1.05$ and lowers pairwise flip rates by 34.5\%.

\subsection{Rotation duality}
\label{sec:rotation}

By L\'evy concentration and a union bound, a Haar-random orthogonal rotation equalizes the coordinate variances to
\[
\frac{\tr(\Sigma)}{D}\pm O\!\left(\|\Sigma\|_{\mathrm{op}}\sqrt{\frac{\log D}{D}}\right).
\]
This single fact has two opposite design consequences.  It destroys the coordinate variance and sign-entropy structure that a coordinate code exploits, so its effect on a fixed two-bit scorer depends on the representation.  It also creates exactly the coordinate uniformity that RaBitQ's randomized codebook and corrected estimator rely on.  The two strategies are dual uses of the same representation.  Across 12 datasets the sign-entropy gap $1-H_{\mathrm{sign}}$ predicts the global two-bit response to rotation with Spearman correlation $0.909$, against $0.657$ for coordinate-variance heterogeneity alone, and the interaction $(1-H_{\mathrm{sign}})\times\mathrm{CV}(\sigma)$ reaches $0.930$.  Section~\ref{sec:exp-rotation} shows what rotation does to local decisions, which is a different question with a different answer.

\subsection{Oracle decomposition of the residual}

For a fixed ranking or pruning role, let $R_a^\circ$ be the oracle residual computed with the population magnitude threshold and radius.  Its Hoeffding decomposition has exact variance
\begin{equation}
V=\kappa_1\sum_ud_u^2+\kappa_2\sum_pA_p^2,
\label{eq:oracle-variance}
\end{equation}
where $d_u$ are node incidences and $A_p$ are unordered-pair edge coefficients (Appendix~\ref{app:p5}).  The replacement (Efron--Stein) proxy satisfies $V\le V_{\mathrm{ES}}\le2V$, with equality at two when the first-order Hoeffding component vanishes.  The actual residual adds a threshold remainder and a shell remainder,
\begin{equation}
R_a=R_a^\circ+\Delta_{\mathrm{thr}}+\Delta_{\mathrm{shell}},
\end{equation}
and their sizes are measurable.  On Cohere and MiniLM the threshold remainder is 1.3 to 4.7\% of the role variance; on unrotated GIST it is essentially all of it, and it falls to 2.3\% after rotation.  The remainder is therefore itself a diagnostic of whether the oracle describes a representation.

\subsection{Necessity: what weak Gaussianity cannot prove}

\begin{theorem}[Counterexample]
\label{thm:counterexample}
There is a sequence of distributions $G_D$ whose fixed-$k$ standardized marginals converge to Gaussian in total variation at rate $O_k(D^{-1})$, whose relative shell mean-square error is $O(D^{-1})$, whose covariance is isotropic, and whose average threshold-boundary occupancy vanishes, yet whose fixed-threshold oracle ranking residual satisfies
\begin{equation}
V_D=\Var(R_D^\circ)=\Theta(D^{-1}),\qquad \kappa_4(R_D^\circ)=\Omega(D^2).
\end{equation}
Consequently a cumulant condition of Bernstein type, or a two-sided sub-gamma bound with variance proxy $v_D=O(V_D)$, requires $b_D/\sqrt{V_D}=\Omega(D^2)$.
\end{theorem}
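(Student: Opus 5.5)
The plan is an explicit rare-contamination mixture, tuned so that every low-order diagnostic is Gaussian up to $O(D^{-1})$ while the residual carries a rare event of huge amplitude. Concretely, I would take
\[
G_D=\begin{cases} Z\sim\mathcal N(0,I_D) & \text{with probability } 1-p_D,\\ \lambda_D\, e_J\,\epsilon & \text{with probability } p_D,\end{cases}
\]
with $J$ uniform on $\{1,\dots,D\}$ and $\epsilon$ a Rademacher sign. The base choice is $p_D\asymp D^{-1}$, and $\lambda_D$ is chosen so that $p_D\lambda_D^2/D=p_D$, i.e.\ $\lambda_D=\sqrt D$. If the fourth-cumulant target forces a larger amplitude, I would instead put a small Gaussian component on the spike and rescale the base component by $(1-\eta_D)$.

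\textbf{Low-order diagnostics.} These are the routine steps, checked first.
\begin{itemize}
\item \emph{Covariance.} By exchangeability over $J$ the covariance is a multiple of $I_D$, and I would normalize it to equal $I_D$.
\item \emph{Marginals.} For any fixed $k$ coordinates, the spike lands in $I$ with probability $p_Dk/D$. Otherwise the marginal is exactly Gaussian, apart from the null vector on the spike event. So the total-variation distance is at most $p_D=O_k(D^{-1})$ after the obvious coupling.
\item \emph{Shell.} On the Gaussian event $\E(\|Z\|/\sqrt D-1)^2=O(D^{-1})$. On the spike event the norm is exactly $\lambda_D$, so with $\lambda_D=\sqrt D$ the spike adds nothing to the relative shell error, and otherwise it adds $p_D(\lambda_D/\sqrt D-1)^2$. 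This is the constraint that pins down $\lambda_D$.
\item \emph{Threshold occupancy.} At most one coordinate is non-Gaussian, so the average threshold-boundary occupancy differs from the Gaussian value by $O(p_D/D)$. Choosing the fixed threshold where the Gaussian occupancy vanishes, or averaging over a shrinking window, makes it vanish.
\end{itemize}

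\textbf{The oracle residual.} I would write $R_D^\circ$ with the normalization of \eqref{eq:oracle-variance}, i.e.\ a $D^{-1}$-normalized sum over coordinates of (quantized product $-$ $c_Q\times$ exact product) for a fixed ranking role $(q,x,y)$ with independent copies of $G_D$. Then I would condition on the number of spiked endpoints.

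On the all-Gaussian event, the Hoeffding variance formula \eqref{eq:oracle-variance} gives $\Var=\Theta(D^{-1})$.

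When an endpoint is spiked, the quantizer saturates: the code value is at most $2$, while the exact coordinate is $\lambda_D$. The residual therefore jumps by an amount $a_D\asymp c_Q\lambda_D\cdot(\text{partner coordinate})/D$, up to constants. Its second moment is $p_Da_D^2$, which must stay $O(D^{-1})$. Its fourth moment is $p_Da_D^4$, and the cumulant identity $\kappa_4=\mu_4-3\mu_2^2$ makes it dominate. With $p_D\asymp D^{-1}$ the target $\kappa_4=\Omega(D^2)$ needs $a_D^4\asymp D^3$, i.e.\ $a_D\asymp D^{3/4}$. But $p_Da_D^2=D^{1/2}$ would then break the variance budget.

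\textbf{Main obstacle: reconciling the two orders.} The difficulty is to achieve $V_D=\Theta(D^{-1})$ and $\kappa_4=\Omega(D^2)$ simultaneously. I would try first to decouple the probabilities: use a two-level mixture where a rarer event of probability $p'_D\ll p_D$ carries the large amplitude. Taking $p'_D=D^{-3}$ with $a'_D\asymp D$ gives second moment $D^{-1}$ and fourth moment $D$. To push the fourth moment to $D^2$, I would take $p'_D=D^{-4}$ and $a'_D\asymp D^{3/2}$, which gives second moment $D^{-1}$ and fourth moment $D^2$. I would realize this by letting $\lambda_D$ grow polynomially on that rare event and compensating the covariance and shell by shrinking the Gaussian component. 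The marginal and shell rates stay $O(D^{-1})$ because the event has probability $D^{-4}$, and the shell error it contributes is $p'_D\lambda_D^2/D$, which I must keep $O(D^{-1})$. Checking that this last constraint is compatible with $a'_D\asymp D^{3/2}$ is the crux; if it fails, I would spread the rare mass over $O(1)$ coordinates to trade norm for residual amplitude.

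\textbf{Consequence.} A Bernstein cumulant condition $|\kappa_k|\le \tfrac{k!}{2}v_Db_D^{k-2}$ at $k=4$ gives $b_D^2\ge\kappa_4/(12v_D)$. With $v_D=O(V_D)=O(D^{-1})$ this yields $b_D=\Omega(D^{3/2})$, hence $b_D/\sqrt{V_D}=\Omega(D^2)$. A two-sided sub-gamma bound implies the same fourth-moment and cumulant control up to constants, by integrating the tail, so the same conclusion follows.
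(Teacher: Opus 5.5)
Your construction is never completed, and this is a genuine gap. The step you flagged as the crux actually fails, and your fallback does not repair it.

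\textbf{The single-endpoint mechanism cannot reach $\Omega(D^2)$ at rate $D^{-4}$.} Consider a contamination event of probability $p$ that places a vector of Euclidean norm $N\gg\sqrt D$ on a node.
\begin{itemize}
\item It adds about $pN^2/D$ to $\E(\|G_D\|/r_D-1)^2$, so the $O(D^{-1})$ shell rate alone forces $pN^2=O(1)$.
\item The $O_k(D^{-1})$ marginal rate forces the same bound, since $pN^2/D$ is the contamination's share of each coordinate variance.
\item Shrinking the Gaussian component offsets neither requirement.
\end{itemize}
If only one endpoint of the ranking triple is contaminated, the residual is linear in that endpoint (e.g.\ $G_1^\top(G_3-G_2)/r_D^2$ with Gaussian partners). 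Conditionally it is therefore of order $N/D$, and its fourth-moment contribution is $O(pN^4/D^4)=O\bigl(1/(pD^4)\bigr)$. At your rate $p'_D=D^{-4}$ this is $O(1)$, not $\Omega(D^2)$. Your target jump $a'_D\asymp D^{3/2}$ would need $N\asymp D^{5/2}$, which makes the shell error $\Theta(1)$. Spreading the mass over $O(1)$ coordinates changes nothing, because only $N$ enters both the constraint and the jump.

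\textbf{The missing idea.} The $D^2$ must come from a \emph{quadratic} interaction between two contaminated nodes. The paper takes dense contamination $A_DS_D$ with $A_D=D^{3/2}$. This has norm $D^2$, which saturates the budget at $p=D^{-4}$. It then uses the event $\{J_1=J_2=1,\ J_3=0\}$, of probability $\Theta(D^{-8})$. On this event the residual contains $-A_D^2S_1^\top S_2/r_D^2$, which is of order $D^{5/2}$. Since $\E(S_1^\top S_2)^4=3D^2-2D$, this term contributes $(3+o(1))D^2$ to the fourth moment but only $O(D^{-3})$ to the variance. Alternatively, your single-endpoint spike does work at the rarer rate $p'_D=D^{-6}$ with amplitude $D^3$, but that is not the choice you proposed.

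\textbf{The consequence step.} Your cumulant conclusion is correct: $\kappa_4\le 12V_Db_D^2$ forces $b_D^2=\Omega(D^3)$. The sub-gamma conclusion does not follow by integrating the tail. The tail $2\exp[-z^2/(2(v+bz))]$ gives only $\E X^4\lesssim v^2+b^4$. This cannot be improved from the tail alone, because a Laplace variable of scale $2b$ satisfies that tail for every $v$. You would obtain only $b_D=\Omega(D^{1/2})$, i.e.\ $b_D/\sqrt{V_D}=\Omega(D)$. You need the MGF bound itself, as the paper uses it:
\begin{itemize}
\item evaluate both signs at $\lambda=[2\max\{\sqrt v,b\}]^{-1}$;
\item apply $\cosh y-1\ge y^4/24$ to get $\E X^4\le 192\,v\max\{v,b^2\}$.
\end{itemize}
The factor linear in $v$ is what, with $v_D=O(D^{-1})$, forces $b_D^2=\Omega(D^3)$.
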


\begin{proof}[Construction]
Take $G_D=(1-J_D)Z_D+J_DD^{3/2}S_D$ with $J_D\sim\mathrm{Bernoulli}(D^{-4})$, $Z_D\sim\mathcal N(0,I_D)$, and $S_D$ Rademacher.  The event that two of the three nodes in a ranking triple are contaminated and one is clean has probability $\Theta(D^{-8})$; after oracle normalization its bilinear term contributes $\Theta(D^2)$ to the fourth moment.  The bounded-code component has fourth moment $O(D^{-2})$ and cannot cancel this in $L^4$, while the total variance stays $\Theta(D^{-1})$.  Appendix~\ref{app:p8} verifies the diagnostic properties and derives the cumulant and MGF consequences.
\end{proof}

The theorem identifies what must be added to reach exponential concentration.  Any route must control one of three things: individual-coordinate influence (bounded codes), conditional cumulants (Doob increments), or the range after truncation.

\subsection{Three routes to concentration}

\begin{enumerate}
\item \textbf{Conditional MGF.}  If the three Doob increments of $R_a^\circ$ satisfy deterministic sub-gamma bounds with parameters $(v_j,b_j)$, then
\begin{equation}
\Pp(|R_a^\circ-\E R_a^\circ|\ge z)\le2\exp\!\left[-\frac{z^2}{2(v_\star+b_\star z)}\right],\qquad v_\star=\sum_jv_j,\ b_\star=\max_jb_j.
\label{eq:conditional-mgf-tail}
\end{equation}
\item \textbf{Stable truncation.}  If a clipped functional $R_L$ agrees with $R_a^\circ$ outside an event of probability $\varepsilon_L$, has bias $|\E R_a^\circ-\E R_L|\le d_L$, and is sub-gamma with $(v_L,b_L)$, then
\begin{equation}
\Pp(|R_a^\circ-\E R_a^\circ|\ge z+d_L)\le2\exp\!\left[-\frac{z^2}{2(v_L+b_Lz)}\right]+\varepsilon_L.
\end{equation}
\item \textbf{Held-out calibration} (\S\ref{sec:heldout}), which needs no analytical tail at all.
\end{enumerate}
Section~\ref{sec:exp-cm} measures the parameters of the first two routes on real embeddings.

% ============================================================
\section{Selective Certificates from Held-Out Blocks}
\label{sec:heldout}

When the analytical route is not available, because Gaussian diagnostics fail, residual tails are heavy, or calibration is degenerate, the decision framework still applies; only the source of the residual tail changes.  This section bounds decision risk from held-out data alone.  The natural form is a \emph{selective} guarantee: retain the decisions whose standardized margin exceeds a cutoff, route the rest to exact verification, and bound the failure rate on the retained set.  This is how a deployed system would use the theory.

\begin{theorem}[Direct selective block certificate]
\label{thm:heldout}
Condition on an independent fit split, so that nuisance parameters, selectors, thresholds, and tie rules are fixed.  For i.i.d.\ blocks $Z_1,\ldots,Z_n$ let $C_{i,s}$ be the fraction of decisions in block $i$ accepted by selector $s$, $U_{i,s,\tau}$ the accepted fraction that lies in the boundary-or-residual union event at threshold $\tau$, and $F_{i,s}$ the accepted fraction that actually fails, so that
\begin{equation}
0\le F_{i,s}\le U_{i,s,\tau}\le C_{i,s}\le1.
\end{equation}
For a prespecified grid $\mathcal G$ of $(s,\tau)$ pairs set $\epsilon_n=\sqrt{\log(|\mathcal G|/\delta)/(2n)}$.  With probability at least $1-\delta$, simultaneously for every grid pair with $\widehat C_s>0$,
\begin{equation}
p_s:=\frac{\E F_{i,s}}{\E C_{i,s}}
\;\le\;
q_{s,\tau}:=\frac{\E U_{i,s,\tau}}{\E C_{i,s}}
\;\le\;
\min\!\left\{1,\frac{\widehat U_{s,\tau}+\epsilon_n}{\widehat C_s}\right\}.
\label{eq:certificate}
\end{equation}
The ratios are defined when $\E C_{i,s}>0$; zero empirical coverage yields no certificate.
\end{theorem}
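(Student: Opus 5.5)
The plan is to prove the two inequalities in \eqref{eq:certificate} separately. The first comes from the pointwise ordering of the block statistics. The second comes from a one-sided Hoeffding bound applied to the numerator only, together with a lower bound on the denominator, all made simultaneous over the grid by a union bound.

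\textbf{Step 1 (deterministic inequality).} Conditional on the fit split, the selectors, thresholds and tie rules are fixed. The constraint $0\le F_{i,s}\le U_{i,s,\tau}\le C_{i,s}\le 1$ holds pointwise, because a failing accepted decision lies in the boundary-or-residual union event. This is the same set inclusion used in the proof of Theorem~\ref{thm:main}: a flip with $\Gamma>\tau$ forces $|R|\ge c_Q\tau$. Taking expectations gives $\E F_{i,s}\le \E U_{i,s,\tau}$. Dividing by $\E C_{i,s}>0$ gives $p_s\le q_{s,\tau}$, and also $q_{s,\tau}\le 1$, which accounts for the $\min\{1,\cdot\}$.

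\textbf{Step 2 (concentration of the numerator).} For each grid pair, the $U_{i,s,\tau}$ are i.i.d. in $[0,1]$ given the fit split. Hoeffding's inequality gives $\Pp(\E U_{i,s,\tau}\ge \widehat U_{s,\tau}+\epsilon)\le \exp(-2n\epsilon^2)$. A union bound over the $|\mathcal G|$ pairs with $\epsilon=\epsilon_n$ makes the total failure probability at most $\delta$. Hence, with probability at least $1-\delta$, we have $\E U_{i,s,\tau}\le \widehat U_{s,\tau}+\epsilon_n$ simultaneously for all pairs.

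\textbf{Step 3 (handling the denominator), the main obstacle.} It remains to replace $\E C_{i,s}$ by $\widehat C_s$ without spending further confidence. Concentrating $C$ separately would need a second Hoeffding term and a $2|\mathcal G|$ union count, which the stated $\epsilon_n$ does not allow. I would first try to avoid this by bounding the difference $U-C$ instead. Since $U_{i,s,\tau}-C_{i,s}\in[-1,0]$ is a single bounded variable, the Hoeffding event can be taken for it instead of for $U$ alone, at the same cost $\epsilon_n$ per pair. On that event,
\[
\E U-\E C\le \widehat U-\widehat C+\epsilon_n .
\]
I would then use the following monotonicity fact. For $a\le b$, $b>0$ and $a'=a+\epsilon\ge 0$, the ratio $a/b$ is at most $(a+(\widehat b-b)+\epsilon)/\widehat b$ when $a\le b$. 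To verify it, write $\E U=\E C-\Delta$ with $\Delta\ge 0$ and $\widehat U=\widehat C-\widehat\Delta$, where the event gives $\Delta\ge\widehat\Delta-\epsilon_n$. Then
\[
q=1-\frac{\Delta}{\E C}, \qquad \frac{\widehat U+\epsilon_n}{\widehat C}\ \ge\ 1-\frac{\Delta}{\widehat C}.
\]

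This only yields $q\le(\widehat U+\epsilon_n)/\widehat C$ when $\E C\ge\widehat C$ or when $\Delta$ is small. I would therefore need to check carefully whether a ratio-form bound (in the spirit of Maurer--Pontil applied to the variable $U-\lambda C$ at $\lambda=q$) closes the remaining case. If it does not, I would accept the fallback of spending $\delta/2$ on a lower confidence bound for $\E C$ and report the slightly larger constant.
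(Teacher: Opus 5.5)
There is a genuine gap: the proposal never proves the stated bound. Your Step~1 is fine, including the $\min\{1,\cdot\}$. The trouble starts in Step~2, which spends the whole budget $\delta$ on the event $\E U_{i,s,\tau}\le\widehat U_{s,\tau}+\epsilon_n$. That event says nothing about the denominator. The $U-C$ device in Step~3 also cannot close the argument. On its event you get $q\le 1-\Delta/\widehat C_s$ only when $\E C_{i,s}\le\widehat C_s$; your stated condition has this reversed. It fails precisely when $\E C_{i,s}>\widehat C_s$. Finally, the fallback of spending $\delta/2$ on a lower confidence bound for $\E C_{i,s}$ proves the two-event bound $(\widehat U+\epsilon_2)/(\widehat C-\epsilon_2)$ with $\epsilon_2=\sqrt{\log(2|\mathcal G|/\delta)/(2n)}$. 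That is a looser statement than the theorem, not the theorem. The question you leave open, whether $U-\lambda C$ at $\lambda=q$ works, is in fact the entire proof.

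The missing step is to centre at the population ratio itself. Fix a grid pair and put $q=q_{s,\tau}$. Conditional on the fit split, $q$ is a deterministic number in $[0,1]$, even though it is unknown. Define $W_i=U_{i,s,\tau}-qC_{i,s}$. These are i.i.d.\ with $\E W_i=0$ by the definition of $q$. Since $0\le U_{i,s,\tau}\le C_{i,s}\le 1$, they satisfy $-q\le W_i\le (1-q)C_{i,s}\le 1-q$, an interval of length one.

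One-sided Hoeffding then gives $\Pp(\widehat U_{s,\tau}-q\widehat C_s<-\epsilon_n)\le e^{-2n\epsilon_n^2}=\delta/|\mathcal G|$. On the complement, $q\widehat C_s\le\widehat U_{s,\tau}+\epsilon_n$, and dividing by $\widehat C_s>0$ gives the certificate directly. That $q$ is unknown is harmless. It enters only the definition of the event, whose probability is all you need, and after rearrangement only observables remain.

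This is one event per grid pair, so a union bound over $\mathcal G$ yields exactly $\epsilon_n$, with no separate concentration of $C$. It must replace your Step~2 rather than supplement it, otherwise you pay for $2|\mathcal G|$ events. Plain Hoeffding suffices; Maurer--Pontil is only needed for the variance-adaptive variant. You do need the sharp length-one range of $W_i$, though: the crude range $[-1,1]$ would double $\epsilon_n$.
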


\begin{proof}
Fix a grid pair and write $q=q_{s,\tau}$.  Since $0\le U_i\le C_i\le1$, the variable $W_i=U_i-qC_i$ has mean zero and lies in $[-q,1-q]$, an interval of length one.  One-sided Hoeffding gives $\widehat U-q\widehat C\ge-\epsilon_n$ except with probability $\delta/|\mathcal G|$; a union bound over the grid and division by $\widehat C>0$ finish the proof.  Dependence among decisions inside a block is arbitrary.
\end{proof}

\begin{corollary}[Direct accepted-failure certificate]
\label{cor:direct-failure}
Replacing $U_i$ by $F_i$ and paying only for the selector family $\mathcal S$ gives, simultaneously over $s\in\mathcal S$,
\begin{equation}
p_s\le\min\!\left\{1,\frac{\widehat F_s+\sqrt{\log(|\mathcal S|/\delta)/(2n)}}{\widehat C_s}\right\}.
\end{equation}
\end{corollary}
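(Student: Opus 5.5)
The plan is to reuse the argument of Theorem~\ref{thm:heldout} verbatim, with $F_{i,s}$ in place of $U_{i,s,\tau}$, and to shrink the union bound from the grid $\mathcal G$ to the selector family $\mathcal S$. No threshold $\tau$ enters the failure fraction, so only selectors need to be prespecified.

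First, conditioning on the independent fit split freezes every selector $s\in\mathcal S$ and the tie rule, so the pairs $(F_{i,s},C_{i,s})$ are i.i.d.\ across blocks. The ordering $0\le F_{i,s}\le U_{i,s,\tau}\le C_{i,s}\le1$ stated in Theorem~\ref{thm:heldout} gives $0\le F_{i,s}\le C_{i,s}\le 1$. Fix $s$ with $\E C_{i,s}>0$ and set $p=p_s=\E F_{i,s}/\E C_{i,s}\in[0,1]$. Define
\[
W_i=F_{i,s}-p\,C_{i,s}.
\]
By construction $\E W_i=0$. Because $F_{i,s}\ge0$ and $F_{i,s}\le C_{i,s}$, we have $W_i\ge -pC_{i,s}\ge -p$ and $W_i\le (1-p)C_{i,s}\le 1-p$. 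So $W_i$ lies in an interval of length one, regardless of how decisions inside a block depend on each other.

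Next, one-sided Hoeffding gives
\[
\Pp\bigl(\widehat F_s-p\widehat C_s< -\epsilon\bigr)\le e^{-2n\epsilon^2}.
\]
Choosing $\epsilon=\sqrt{\log(|\mathcal S|/\delta)/(2n)}$ makes this probability $\delta/|\mathcal S|$. A union bound over the $|\mathcal S|$ selectors gives, with probability at least $1-\delta$, $p_s\widehat C_s\le \widehat F_s+\epsilon$ simultaneously for all $s$. For each $s$ with $\widehat C_s>0$, dividing by $\widehat C_s$ gives the ratio bound. Intersecting with the trivial bound $p_s\le 1$, which holds since $F\le C$, gives the minimum.

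There is no real obstacle here. The only points needing care are bookkeeping. The union bound must range over $\mathcal S$ alone, which is legitimate because the events do not depend on $\tau$. Selectors with $\E C_{i,s}=0$ are excluded, as in Theorem~\ref{thm:heldout}; for these the certificate is simply not issued. The fact that $p$ is an unknown population quantity causes no difficulty, because the Hoeffding bound is applied at this fixed deterministic value for each $s$.
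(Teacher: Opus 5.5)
Your proposal is correct and matches the paper's proof: the appendix proves Theorem~\ref{thm:heldout} by applying one-sided Hoeffding to the mean-zero, range-one variable $U_{i,s,\tau}-q\,C_{i,s}$, and obtains this corollary by substituting $F$ for $U$ and union-bounding over $\mathcal S$ instead of $\mathcal G$. That is exactly your argument, and your justification of the truncation at one via $F_{i,s}\le C_{i,s}$ fills in the only detail the paper leaves implicit.
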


The two certificates bound the same quantity $p_s$ but through different events.  The union certificate keeps the boundary-plus-residual mechanism of Theorem~\ref{thm:main} visible and is what one reports when the mechanism is the object of interest; the direct certificate targets the observed failure of the frozen rule and is tighter.  A conservative alternative that controls numerator and denominator separately costs $\sqrt{\log(2|\mathcal G|/\delta)/(2n)}$ in both and is pointwise looser; Appendix~\ref{app:p6-selective} also gives a variance-adaptive empirical-Bernstein version obtained by inverting a single-crossing function.

\paragraph{Protocol.}
We condition on an independent coefficient-fit split, then use prespecified prefixes of 128, 256, 512, and 1{,}024 i.i.d.\ certification blocks and 1{,}024 independent validation blocks; each block is one query or target with 32 sampled candidates.  Every method and sample size is fixed in advance, so each reported point carries its own marginal $1-\delta$ guarantee.  Results are in \S\ref{sec:exp-certificates}.

% ============================================================
\section{Quantizer Instantiations}
\label{sec:instantiations}

The decision shell is quantizer-agnostic; each family enters through its own residual mechanism.

\paragraph{Coordinate binary codes.}
The analytical instance of \S\ref{sec:bridge}: covariance structure, threshold effects, magnitude-bit gain, rotation duality, and the conditional-MGF route are all specific to this family.

\paragraph{RaBitQ.}
A randomized ratio estimator~\citep{gao2023rabitq} with a per-rotation error radius.  For a fixed query $q$ and data directions $x,y$ sharing one random rotation $P$,
\begin{equation}
\Pp\{|\varepsilon_R(x,q)-\varepsilon_R(y,q)|>\rho_x+\rho_y\}\le4\exp(-c_0\epsilon_0^2),
\end{equation}
where $\rho_x,\rho_y$ are the random radii.  The shared rotation couples the two edges, so ranking and pruning statements go through joint events and the triangle inequality rather than independence.  For pruning on raw squared distances the distance rule squares to an $\alpha^2$ rule with edge-specific radial factors (Appendix~\ref{app:p7}); a rule stated directly in squared distance or cosine dissimilarity keeps its own parameter.

\paragraph{Lucene BBQ.}
A role-asymmetric design~\citep{trent2024bbq}: stored vectors are one-bit while query-role vectors are int4.  During HNSW construction Lucene keeps temporary int4 query-role vectors, scores candidates against existing one-bit vectors, and uses the int4 representations for diversity and reverse-link scoring; the temporary file is removed afterwards.  Each scorer role therefore needs its own affine calibration,
\begin{equation}
\widehat d_{A\to B}=b_{A\to B}+a_{A\to B}\,d+\xi_{A\to B}.
\end{equation}
Our reference scorer reproduces the role semantics, not Lucene's metric-specific corrections.

\paragraph{Block estimators.}
Product quantization~\citep{jegou2011product,ge2014optimized} decomposes the residual over codebook blocks and enters through bounded-block concentration or held-out survival laws; the coordinate-sign Stein identity does not transfer.

\begin{table}[htbp]
\centering
\caption{Cross-quantizer validation on the decision interface, averaged over Cohere, MiniLM, SIFT, and GIST with 60{,}000 held-out ranking and 60{,}000 pruning triples per dataset.  ``Hard'' restricts to the 20\% smallest exact margins.  All calibration slopes are positive.  The PQ reference stores 64 index bits per vector.}
\label{tab:cross-quantizer}
\small
\begin{tabular}{lrrrrr}
\toprule
& & \multicolumn{2}{c}{Rank flip} & \multicolumn{2}{c}{Prune flip} \\
Quantizer family & Edge $\rho$ & all & hard & all & hard \\
\midrule
Sign (1-bit) & .373 & 35.9\% & 47.1\% & 20.1\% & 36.5\% \\
Shared 2-bit & .789 & 18.5\% & 41.5\% & 12.8\% & 33.9\% \\
RaBitQ reference & .939 & 11.8\% & 37.2\% & 6.5\% & 27.1\% \\
BBQ-like & .923 & 12.8\% & 37.3\% & 9.0\% & 30.7\% \\
Scalar int4 & .997 & 2.1\% & 10.1\% & 1.2\% & 5.8\% \\
PQ (16 blocks $\times$ 4 bits) & .801 & 22.4\% & 43.3\% & 11.0\% & 33.1\% \\
\bottomrule
\end{tabular}
\end{table}

Table~\ref{tab:cross-quantizer} makes one point that no bit budget removes.  Scalar int4 has the best interface metrics of the inspected references, with edge correlation $0.997$ and $2.1\%$ ranking flips, yet on the hardest fifth of margins it still flips $10.1\%$ of ranking decisions.  A larger budget changes the residual law and shrinks the boundary crossings; it does not move the boundary.

\FloatBarrier
% ============================================================
\section{Experiments}
\label{sec:experiments}

The experiments test each link of the argument: boundary mass, residual covariance and its origin, trace composition, held-out certification, the analytical regime and its edge, and the decoupling of fidelity from decision risk.  All protocols use disjoint data splits, and every experiment is a numbered, non-interactive Python module with fixed seeds.

\subsection{Datasets and protocols}

\begin{table}[htbp]
\centering
\caption{Representation regimes.  ``Learned'' means contrastive or self-supervised training; ``classical'' means hand-crafted or dimensionality-reduced features.}
\label{tab:datasets}
\small
\begin{tabular}{lL{4.8cm}cl}
\toprule
Group & Datasets & $D$ & Role \\
\midrule
Contrastive text & Cohere, MiniLM, BGE-M3, Jina, MSMARCO & 384--1024 & primary regimes \\
Vision & Wolt-CLIP, Landmark-DINO & 512--768 & modality transfer \\
Classical & SIFT, GIST, GloVe & 100--960 & negative controls \\
Synthetic & Gaussian, sphere, random, contaminated & 64--2048 & mechanism tests \\
\bottomrule
\end{tabular}
\end{table}

Each dataset contributes a fixed random sample of vectors (seed 42), $L_2$-normalized.  Local decision experiments use fixed 32-neighbour candidate sets with 80 calibration and 160 held-out queries per dataset; held-out certificate experiments use the block protocol of \S\ref{sec:heldout}; the cross-quantizer study uses 1{,}200 fit and 2{,}000 held-out vectors per dataset.  Pruning uses the standard non-saturated RobustPrune rule with $\alpha=1.2$ and $M=32$ unless stated otherwise.

\subsection{Standardized margins predict local failures}
\label{sec:exp-margins}

On Cohere two-bit decisions the flip rate falls monotonically with the standardized margin $M=c_Q\Gamma/v$ over more than two orders of magnitude (Table~\ref{tab:margin-bins}), and no flips are observed above $M=5$.

\begin{table}[htbp]
\centering
\caption{Flip rate by standardized-margin bin on Cohere-768 two-bit codes, $n=4{,}960$ held-out decisions.}
\label{tab:margin-bins}
\begin{tabular}{lcccccc}
\toprule
Margin bin & $M<0.25$ & $0.25$--$1$ & $1$--$2$ & $2$--$3$ & $3$--$5$ & $M>5$ \\
\midrule
Flip rate & 48.6\% & 25.7\% & 9.50\% & 2.54\% & 0.14\% & 0.00\% \\
Decisions & 70 & 381 & 1210 & 1577 & 1413 & 309 \\
\bottomrule
\end{tabular}
\end{table}

Across 24 held-out dataset--quantizer configurations (12 datasets, two-bit codes with and without rotation), the calibrated boundary-plus-tail predictor tracks unseen ranking flip rates with Spearman correlation $0.970$ and unseen pruning flip rates with $0.992$.  Using each configuration's global distance Spearman correlation as the predictor instead gives $0.739$ for ranking and $0.053$ for pruning; using its distance mean squared error gives $0.184$ and $0.447$.  Fidelity is a weak predictor of ranking risk and no predictor of pruning risk; the standardized margin predicts both.

\subsection{Where the shared-query covariance comes from}
\label{sec:exp-covariance}

Table~\ref{tab:covariance} reports correlations far above the $1/2$ bound that holds for i.i.d.\ endpoints and a symmetric kernel.  To locate the source we keep one globally fitted affine edge scorer fixed and vary only how pairs are generated (Table~\ref{tab:selection-covariance}).

\begin{table}[htbp]
\centering
\caption{Pooled shared-query residual correlation under different pair-generation regimes with one fixed edge scorer.  ``Anchor'' pairs the exact top candidate with each remaining top-32 candidate.}
\label{tab:selection-covariance}
\begin{tabular}{llrrrr}
\toprule
Dataset & Scorer & i.i.d. & Random cand. & Top-256 & Top-32 / anchor \\
\midrule
Cohere & 1-bit & .361 & .312 & .524 & .525 / .544 \\
Cohere & 2-bit & .391 & .346 & .571 & .599 / .613 \\
Cohere & rotated 2-bit & .259 & .212 & .324 & .340 / .364 \\
MiniLM & 2-bit & .010 & .020 & .067 & .111 / .118 \\
GIST & 1-bit & .386 & .370 & .925 & .934 / .870 \\
GIST & 2-bit & .405 & .422 & .576 & .640 / .616 \\
\bottomrule
\end{tabular}
\end{table}

Local selection, not identity collisions, drives the increase: an i.i.d. control with replacement and a distinct-triple control differ by at most $0.003$, while moving from random candidates to the top-32 roughly doubles the correlation on Cohere and more than doubles it on GIST.  The exact weighted within/between identity of Appendix~\ref{app:p2p3} holds to floating-point precision in every regime, and most pooled covariance in this experiment comes from variation in query-specific residual means, while weighted within-query covariance is small on average.  This describes the selected-pair population with a random query; conditioning on a fixed query removes the between-query term.  Moreover, this experiment uses a global affine edge calibration, whereas Table~\ref{tab:covariance} uses ranking-margin calibration.  It therefore identifies query-level heterogeneity under this protocol without quantitatively attributing the earlier independence ratios.

\subsection{Trace coupling and its consequences}
\label{sec:exp-trace}

All 960 sampled selection calls (three datasets, two quantizers, 160 targets each) satisfy the identity of Theorem~\ref{thm:trace}: the approximate output differs from the exact output exactly when some candidate-level action differs, and never otherwise.  Table~\ref{tab:trace} shows what the identity implies for output agreement.

\begin{table}[htbp]
\centering
\caption{Standard non-saturated RobustPrune replay with exact candidate pools, two-bit codes, $\alpha=1.2$, $M=32$.  Fixed-order replay reuses the exact permutation; free replay also re-sorts candidates by quantized distance.}
\label{tab:trace}
\begin{tabular}{lccc}
\toprule
Dataset & Local prune flip & Fixed-order Jaccard & Free replay Jaccard \\
\midrule
Cohere-768 & 13.06\% & 0.322 & 0.395 \\
MiniLM-384 & 1.92\% & 0.654 & 0.529 \\
GIST-960 & 7.91\% & 0.458 & 0.260 \\
\bottomrule
\end{tabular}
\end{table}

Output disagreement is large even where local disagreement is modest, because each candidate action is an OR over its whole selected prefix, and the first action difference persists in an append-only output.  Every one of the sampled triple disagreements is a false keep; no false prune occurs at $\alpha=1.2$.  When a saturating refill step is added after the standard call, it contributes $34.1\%$ of the final edges on Cohere, none on MiniLM, and $28.8\%$ on GIST; these shares belong to the refill variant, not to RobustPrune itself, and the local pruning analysis is unaffected by them.

\subsection{Held-out selective certificates}
\label{sec:exp-certificates}

Table~\ref{tab:certificate} reports the certificates of \S\ref{sec:heldout} at 512 and 1{,}024 blocks, averaged over 24 dataset--quantizer configurations per role.  Every method issued a certificate on every configuration, and on every one the empirical risk of the selected policy on 1{,}024 independent validation blocks was below its certificate.

\begin{table}[htbp]
\centering
\caption{Selective block certificates ($\delta=0.05$), averaged over 24 configurations per role.  Coverage is the fraction of validation decisions retained by the selected policy and risk is their observed flip rate; each method optimizes its own policy, so coverages differ.  Ranking / pruning.}
\label{tab:certificate}
\small
\begin{tabular}{llrrr}
\toprule
Blocks & Method & Certificate & Validation coverage & Validation risk \\
\midrule
512 & two-event Hoeffding & 20.59\% / 22.95\% & 67.0\% / 67.6\% & 0.42\% / 0.86\% \\
512 & direct union & 16.74\% / 18.65\% & 64.6\% / 63.5\% & 0.33\% / 0.75\% \\
512 & empirical-Bernstein union & 9.49\% / 10.91\% & 61.2\% / 56.8\% & 0.26\% / 0.72\% \\
512 & direct failure & 9.07\% / 10.45\% & 87.8\% / 80.2\% & 1.34\% / 2.01\% \\
\midrule
1{,}024 & two-event Hoeffding & 15.13\% / 17.36\% & 63.7\% / 61.0\% & 0.32\% / 0.75\% \\
1{,}024 & direct union & 12.74\% / 14.65\% & 61.2\% / 61.0\% & 0.26\% / 0.75\% \\
1{,}024 & empirical-Bernstein union & 6.13\% / 7.27\% & 57.7\% / 51.1\% & 0.22\% / 0.44\% \\
1{,}024 & direct failure & 6.75\% / 7.99\% & 81.9\% / 75.2\% & 0.94\% / 1.44\% \\
\bottomrule
\end{tabular}
\end{table}

For a fixed selector at the same confidence level, the direct ratio removes the denominator slack of the two-event bound without changing the accepted set.  The policies in Table~\ref{tab:certificate} are optimized separately by method, so their retained fractions differ.  The empirical-Bernstein union certificate is tighter with lower coverage, while the direct-failure certificate retains about four fifths of decisions because it does not carry the structural gap between the union event and actual failure.  Certificate values must therefore be read together with retained coverage.

A cutoff targeting the highest-margin fifth is frozen on calibration data and then applied to a separate validation pool.  Across the 24 configurations, the mean validation retention is $19.9\%$ for ranking and $20.9\%$ for pruning; mean empirical flip rates fall from $8.58\%$ to $5.27\%$ and from $3.84\%$ to $0.12\%$, respectively.  This is a held-out empirical reduction, while the population guarantee is provided separately by the block certificates above.

\subsection{Rotation: global fidelity does not determine decision risk}
\label{sec:exp-rotation}

A Haar-random rotation is the cleanest intervention on a coordinate code: it changes the representation's coordinate structure and nothing else.  On Cohere it raises the Spearman correlation between quantized and exact distances from $0.576$ to $0.926$ and lowers the distance mean squared error by a factor of four.  What happens to local decisions depends on which decisions are asked.

\begin{table}[htbp]
\centering
\caption{Rotation on Cohere codes under two decision protocols.  The distance Spearman correlation is measured on the calibration pairs of the block protocol.}
\label{tab:rotation}
\small
\begin{tabular}{llcc}
\toprule
Protocol & Quantity & Unrotated & Rotated \\
\midrule
\multirow{4}{*}{32-neighbour anchor pairs} & one-bit flip rate & 8.89\% & 15.58\% \\
 & two-bit flip rate & 5.83\% & 8.81\% \\
 & shared-query residual correlation & 0.870 & 0.435 \\
 & difference residual s.d.\ (cosine units) & 0.0134 & 0.0148 \\
\midrule
\multirow{3}{*}{48-candidate held-out blocks} & distance Spearman & 0.576 & 0.926 \\
 & ranking flip rate & 4.43\% & 4.48\% \\
 & pruning flip rate & 10.25\% & 6.93\% \\
\bottomrule
\end{tabular}
\end{table}

On the anchor protocol, which compares the exact nearest neighbour against each of its 31 competitors, rotation raises the two-bit flip rate by half and nearly doubles the one-bit rate.  The mechanism is visible in the residual structure.  Rotation shrinks each edge's error variance by a factor of three to four, which is why global fidelity improves, but it also cuts the shared-query correlation from $0.870$ to $0.435$.  At fixed marginal variances a smaller positive correlation \emph{raises} the variance of a difference, and here the two effects net out to an 11\% larger standard deviation for the comparison residual together with a more negative bias relative to the margin (from $-0.26$ to $-0.36$ standard deviations).  On the block protocol, whose candidate sets are larger and whose pairs are not anchored on the top candidate, the ranking flip rate is unchanged and the pruning flip rate falls.  A fidelity statistic that moves by $+0.35$ cannot predict an effect whose sign depends on the decision population; quantizer choices for graph search must be evaluated on the decisions the graph actually makes.

\subsection{The analytical regime and its edge}
\label{sec:exp-eligibility}

Coordinate-wise diagnostics are broadly compatible with Gaussian marginals in the inspected learned representations.  The joint law is where datasets separate.  Testing $k=8$ coordinate subsets with a Kolmogorov--Smirnov test on the Mahalanobis radius at the 1\% level (24 random subsets per dataset), BGE-M3 and GloVe reject none, Landmark-DINO rejects 8\%, MiniLM 13\%, Cohere 29\%, and Wolt-CLIP 58\%; SIFT and GIST reject every subset.  The exact Gaussian oracle is therefore partially supported on the primary dataset, which is the reason the analytical route is paired with the model-free one rather than offered alone.

Where the oracle applies it is accurate.  The incidence-covariance identity~\eqref{eq:oracle-variance} reproduces the measured ranking and pruning residual variances within $4.6\%$ across twelve role configurations, with median error $1.2\%$.  The variance route is also tight in a way the worst-case constant does not reveal.  Over 28 dataset--coordinate--role configurations, a componentwise Cauchy--Schwarz proxy exceeds the measured variance by a median factor of $40.8$, while the replacement proxy exceeds it by a median factor of $1.67$ (range $1.21$ to $2.04$), consistent with the analytical factor of at most two.  The reason the combined kernel behaves so much better than the sum of its parts is that in all 28 configurations the bounded-code and bilinear components have negative covariance, with a median cancellation ratio of $0.947$.

The GIST descriptors show the same machinery diagnosing and repairing a failure.  Unrotated GIST has calibration slope near zero, zero margin correlation, residual kurtosis $26$, a left tail $7.7$ times Gaussian, and a threshold remainder equal to the whole role variance; every one of these is observable without a search run.  A random rotation raises the margin correlation to $0.510$ and $0.669$ for one- and two-bit codes, brings the kurtosis to $3.5$ and $2.8$, the left tail to $1.88$ and $0.76$ times Gaussian, and the threshold remainder to $2.3\%$, and the flip rates fall from $100\%$ to $14.80\%$ (one-bit) and from $35.83\%$ to $9.33\%$ (two-bit).

\subsection{Conditional-MGF and truncation parameters}
\label{sec:exp-cm}

Across 24 role configurations (six datasets, two quantizers, two roles) the untruncated increment parameter $b_\star/\mathrm{sd}$ has minimum $0.133$, median $0.158$, and maximum $0.324$; the Doob variance sum exceeds the total variance by 5 to 10\%; and the calibration-to-validation variance transfer ratio lies between $0.945$ and $1.075$.  Clipping at the 1\% exceptional-mass cutoff brings $b_\star/\mathrm{sd}$ to a median of $0.147$ and a maximum of $0.258$.  Every configuration admits a clipped-functional bound below one, and no held-out tail exceeded its reported bound.  These are the measured inputs to the two analytical routes of \S\ref{sec:bridge}; a uniform conditional-MGF assumption over the population is a hypothesis these measurements are consistent with, not one they establish.

\subsection{Necessity and falsifiability}
\label{sec:exp-necessity}

Two constructions show that the framework's ingredients are necessary.  At exact margin zero, residuals of arbitrarily small norm select opposite decisions, so the boundary-mass term cannot be dropped.  And two perturbations of the same exact scores, one light-tailed and one coupled to the decision boundary, have Spearman correlation $0.9860282$ with the exact scores to seven digits, yet flip $5.09\%$ against $10.00\%$ of all decisions and $23.91\%$ against $50.00\%$ of the hardest fifth.  Mean squared error does not rescue the comparison: it is \emph{lower} for the boundary-coupled perturbation ($0.0170$ against $0.0251$), so a practitioner selecting by either global metric would pick the quantizer with twice the failure rate.  Standardized margins separate the two immediately.

The framework also abstains where it should: on nonpositive calibration, heavy tails, failed Gaussian diagnostics, and saturated certificates it reports an uncertified decision rather than a number.

\FloatBarrier
% ============================================================
\section{Discussion: Practical Guidance}
\label{sec:discussion}

\paragraph{A per-decision reliability score.}
Given an embedding and a quantizer, fit the affine calibration and residual scale on independent blocks and evaluate $M=c_Q\Gamma/v$ wherever the exact margin is available.  In Table~\ref{tab:margin-bins}, $M>5$ has no observed flips in 309 decisions and $M<1$ flips more than a quarter of the time.  Because $\Gamma$ is the exact margin, this is an offline audit; an online router additionally needs an observable lower bound on the margin or a residual envelope.

\paragraph{Select quantizers on decisions, not on fidelity.}
Section~\ref{sec:exp-rotation} shows a rotation that raises global fidelity by $+0.35$ while raising one local flip rate by half and leaving another unchanged, and \S\ref{sec:exp-necessity} shows two quantizers with identical rank fidelity, opposite ordering by mean squared error, and a factor of two in failure rate.  Pilot the candidate quantizers on the decision population of the intended index, measure standardized margins and flip rates there, and choose on those.

\paragraph{Two diagnostics, not one.}
Sign entropy is the natural single routing statistic and it is insufficient.  Cohere ($H_{\mathrm{sign}}=0.747$) and SIFT ($0.746$) are indistinguishable by it, and their global rotation responses are nearly identical ($+0.170$ and $+0.200$), yet coordinate binary quantization is usable on Cohere and useless on SIFT.  The entropy gap $1-H_{\mathrm{sign}}$ predicts how much rotation \emph{changes} a code ($\rho=0.909$ over 12 datasets).  Whether the unrotated code is usable at all is answered by the eligibility gate (calibration slope, margin correlation, residual tail), which passes Cohere and rejects SIFT.  Run the gate first; if it passes, rotation is an optimization to be evaluated on decision metrics, and if it fails, rotation is a repair whose size the entropy gap predicts.

\paragraph{Two levels of assurance.}
Union-event certificates keep the boundary-plus-residual mechanism visible; direct-failure certificates bound the frozen rule's observed failure and are tighter at higher coverage.  Both are population statements for the block distribution they were calibrated on, and neither transfers across a distribution shift without recalibration.

\paragraph{Open directions.}
Whether the trace-length saturation boundary can be characterized from representation geometry alone; whether adaptive candidate sets (beam search) admit a martingale extension of the coupling argument; and whether a pilot selector choosing rotation, bit width, and verification cutoff from covariance statistics alone recovers exact-score topology.

\FloatBarrier
% ============================================================
\section{Conclusion}
\label{sec:conclusion}

Low-bit vector search cannot be understood through a single fidelity number.  The decisions that flip are concentrated at the boundary, and their noise depends on shared structure that global metrics erase.  A distribution-free boundary--residual decomposition localizes the risk, covariance-aware role analysis captures the shared structure, and a deterministic frozen-trace coupling connects local decisions to the neighbour lists a graph algorithm produces.  Under an exact Gaussian model of contrastive representations the residual covariance is explicit and a magnitude bit provably helps an aligned scorer; a rare-contamination construction shows exactly why low-order Gaussian diagnostics cannot by themselves deliver exponential tails; and held-out block certificates bound selective risk for any quantizer whose analytical description is out of reach.

\paragraph{Limitations.}
The trace certificates saturate on long paths and say nothing about candidate generation or graph navigability.  The Gaussian identities are oracle results, and approximate Gaussian diagnostics do not come with a transfer theorem.  The held-out certificates require representative independent blocks, use exact margins in the selector, and do not transfer under distribution shift.  End-to-end recall additionally depends on candidate coverage, which is outside this analysis.

% ============================================================
\section*{Reproducibility}

All experiments are implemented as numbered, non-interactive Python modules with fixed seeds, disjoint data pools, and machine-readable JSON or NPZ outputs.  Negative and vacuous results are preserved in the result files.  Appendix~\ref{app:experiments} lists the protocol of each experiment.

\acks{No external funding was received for this work.}

\appendix
\section{The Decision Shell: Proofs and Rates}
\label{app:p1}

\subsection{Conservative ties and event inclusion}

Write $\widehat\Psi_Q=c_Q\Psi+R_{\Psi,Q}$ with $c_Q>0$.  If $\Psi>0$ and the conservative failure event occurs, then $\widehat\Psi_Q\le0$ and
\begin{equation}
R_{\Psi,Q}\le-c_Q\Psi=-c_Q\Gamma_\Psi.
\end{equation}
If $\Psi<0$, failure gives $R_{\Psi,Q}\ge c_Q\Gamma_\Psi$.  Hence
\begin{equation}
\mathcal E_\Psi\subseteq\{\Gamma_\Psi>0,\ |R_{\Psi,Q}|\ge c_Q\Gamma_\Psi\},
\label{eq:app-p1-inclusion}
\end{equation}
and splitting the right-hand side according to $0<\Gamma_\Psi\le\tau$ or $\Gamma_\Psi>\tau$ proves Theorem~\ref{thm:main}.  If only strict sign reversal counts as failure, the residual event may use a strict inequality.  If exact ties are part of the target risk, one adds $\Pp(\Psi=0)$ together with the exact and approximate tie actions.

\subsection{Explicit small-ball rate}

Under Corollary~\ref{cor:subgaussian} write $r=v/c_Q$ and $A=Cr^\beta$.  When $0<A<2$, let $L=\log(2/A)$ and $\tau_r=r\sqrt{2L}$.  If $\tau_r\le\tau_0$, substitution into \eqref{eq:small-ball-bound} gives
\begin{equation}
\Pp(\mathcal E_\Psi)\le Cr^\beta(2L)^{\beta/2}+Cr^\beta,
\label{eq:app-p1-rate}
\end{equation}
so the worst-case consequence of marginal small-ball and residual-tail assumptions is
\begin{equation}
O\!\left(r^\beta[\log(1/r)]^{\beta/2}\right).
\end{equation}
The logarithm is a feature of the worst case over the assumed class, not a lower bound for each fixed distribution: a margin-conditional residual tail removes it and yields $O(r^\beta)$ by direct integration.

\subsection{Necessity at the boundary}

At $\Psi=0$, the approximations $\widehat\Psi_+=\epsilon$ and $\widehat\Psi_-=-\epsilon$ have residual magnitudes tending to zero with $\epsilon$ yet select opposite actions.  Every orientation theorem therefore needs a positive margin, a boundary-mass term, or an explicit tie policy.

\section{Covariance-Aware Ranking}
\label{app:p2p3}

\subsection{Difference residual and bias}

Let $X=(\xi_x,\xi_y)^\top$, $m=\E X$, and $u=(-1,1)^\top$.  The joint MGF proxy gives
\begin{equation}
\E\exp\{\lambda u^\top(X-m)\}\le\exp\{\lambda^2u^\top K_Qu/2\},
\end{equation}
so $R_{\mathrm{rank}}-\mu_R$ is sub-Gaussian with
\begin{equation}
\mu_R=u^\top m,\qquad
\nu_{\mathrm{rank}}^2=u^\top K_Qu=K_{xx}+K_{yy}-2K_{xy}.
\end{equation}
For a fixed positive exact margin $\gamma$ the failure event is $R_{\mathrm{rank}}\le-c_Q\gamma$, and Chernoff's method gives Proposition~\ref{prop:ranking}.  For the negative orientation the relevant right tail has effective margin $c_Q\gamma-\mu_R$.  If $\nu_{\mathrm{rank}}=0$, Jensen's inequality and the MGF bound give $R_{\mathrm{rank}}=\mu_R$ almost surely and the decision is deterministic.

\subsection{Shared-endpoint covariance for i.i.d.\ and selected populations}

Let $X,Y,Z$ be i.i.d.\ and $h\in L^2(P\otimes P)$ symmetric, with Hoeffding decomposition
\begin{equation}
h(x,y)-\mu=h_1(x)+h_1(y)+h_2(x,y),
\end{equation}
where $h_2$ is degenerate in each argument.  Put $a=\Var(h_1(X))$ and $b=\Var(h_2(X,Y))$.  Orthogonality gives
\begin{equation}
\Var(h(X,Y))=2a+b,\qquad\Cov(h(X,Y),h(X,Z))=a,
\end{equation}
so that, when the marginal variance is positive,
\begin{equation}
0\le\Corr(h(X,Y),h(X,Z))=\frac{a}{2a+b}\le\frac12,
\qquad
1\le\frac{\Var(R_1)+\Var(R_2)}{\Var(R_1-R_2)}=\frac{2a+b}{a+b}\le2.
\label{eq:app-iid-half}
\end{equation}
The lower bounds are attained at $a=0,b>0$ and the upper bounds at $b=0,a>0$; if $a=b=0$ both ratios are undefined.

For query-conditioned roles let $m_a(Q)=\E[R_a\mid Q]$, $v_a(Q)=\Var(R_a\mid Q)$, and $c(Q)=\Cov(R_L,R_R\mid Q)$.  The laws of total covariance and total variance give
\begin{align}
\Cov(R_L,R_R)&=\E c(Q)+\Cov(m_L(Q),m_R(Q)),\\
\Var(R_L-R_R)&=\E[v_L(Q)+v_R(Q)-2c(Q)]+\Var(m_L(Q)-m_R(Q)).
\label{eq:app-selected-covariance}
\end{align}
When candidates are conditionally independent draws from a frozen selection kernel $K_Q$, $c(Q)=0$, and the pooled correlation is driven entirely by the second term; it can approach one when query-specific means dominate the conditional variance.  Selection changes the candidate law, hence conditional means and variances; it changes the query mixture weights only if query sampling, retention, or row weighting also changes.  For uniform sampling without replacement from a pool of size $M$,
\begin{equation}
\Cov(f_I,g_J\mid Q)=-\frac{c_{fg}(Q)}{M-1}.
\end{equation}
A fixed anchor has zero conditional covariance because its residual is constant given the frozen state, so its within-query correlation is undefined.  These identities describe second moments; the MGF proxy $K$ is a separate input.

\subsection{Exact empirical within/between identity}

For query group $g$ with $n_g$ observed pairs $(\ell_{gi},r_{gi})$, let $N=\sum_gn_g$, let $\bar\ell_g,\bar r_g$ be group means, and $\bar\ell,\bar r$ row-weighted grand means.  Expanding each centered product gives the deterministic identity
\begin{equation}
\sum_{g,i}(\ell_{gi}-\bar\ell)(r_{gi}-\bar r)
=\sum_{g,i}(\ell_{gi}-\bar\ell_g)(r_{gi}-\bar r_g)
+\sum_gn_g(\bar\ell_g-\bar\ell)(\bar r_g-\bar r),
\label{eq:app-anova-covariance}
\end{equation}
so that, with the pooled $N-1$ denominator,
\begin{equation}
s_{\mathrm{pool}}=\sum_{g:n_g\ge2}\frac{n_g-1}{N-1}s_g+\frac1{N-1}\sum_gn_g(\bar\ell_g-\bar\ell)(\bar r_g-\bar r).
\end{equation}
The same matrix identity decomposes both marginal variances and therefore $\Var(R_L-R_R)$.  It holds for any row dependence; reading its two pieces as unbiased estimators of the population terms in \eqref{eq:app-selected-covariance} requires a sampling model.

\subsection{Fixed top-$K$}

Let $C$ be fixed before the approximation randomness, let $A$ be the exact top-$K$ subset, and suppose no exact tie crosses the boundary.  Exact top-$K$ preservation follows if every $x\in A$ stays ahead of every $y\in C\setminus A$, so
\begin{equation}
\Pq(\widehat A\ne A)\le\sum_{x\in A}\sum_{y\notin A}\Pq\{\widehat d_Q(q,y)\le\widehat d_Q(q,x)\},
\label{eq:app-topk}
\end{equation}
and any deterministic sufficient comparison set may replace the full cross product.

\section{Standard Vamana Pruning}
\label{app:pruning}

The RobustPrune rule declares candidate $c$ dominated by selected neighbour $s$ when $\alpha d(s,c)\le d(t,c)$ with $\alpha\ge1$.  Define
\begin{equation}
\Psi_\alpha=d(t,c)-\alpha d(s,c),\qquad
\widehat\Psi_{\alpha,Q}=c_Q\Psi_\alpha+Z_{\alpha,Q},\qquad
Z_{\alpha,Q}=\xi_{tc}-\alpha\xi_{sc},
\end{equation}
with action $D=\mathbf 1\{\Psi_\alpha\ge0\}$.  For nonzero margins the false-keep and false-prune events are
\begin{equation}
\{\Psi_\alpha>0,\ \widehat\Psi_{\alpha,Q}<0\},\qquad
\{\Psi_\alpha<0,\ \widehat\Psi_{\alpha,Q}\ge0\},
\end{equation}
and at an exact structural tie disagreement is $\{\Psi_\alpha=0,\ Z_{\alpha,Q}<0\}$.  A no-atom property of $Z_{\alpha,Q}$ rules out exact approximate ties but says nothing about this directional probability, which must be carried as its own term.

With $w=(1,-\alpha)^\top$ a joint MGF proxy gives
\begin{equation}
\nu_{\alpha,Q}^2=w^\top K_Q^{(3)}w=K_{tc,tc}+\alpha^2K_{sc,sc}-2\alpha K_{tc,sc}.
\end{equation}
Writing $\mu_Z=\E Z_{\alpha,Q}$, the two orientations have effective one-sided margins $c_Q\gamma+\mu_Z$ and $c_Q\gamma-\mu_Z$.  A through-origin calibration does not make $\mu_Z$ vanish, and a shared edge intercept $b$ enters $Z_{\alpha,Q}$ as $(1-\alpha)b$.  If both edge residual magnitudes are deterministically at most $\epsilon$, then $|Z_{\alpha,Q}|\le(1+\alpha)\epsilon$.

\section{Frozen Semantic Trace Coupling}
\label{app:p4}

\subsection{Candidate-level state machine}

Fix unique candidate labels in an exact-distance-sorted permutation $\pi=(c_{(1)},\ldots,c_{(N)})$ with deterministic tie-breaking, a target $t$, a degree budget $M$, and $\alpha\ge1$.  Starting from $S_1=()$, define at each visited position
\begin{equation}
D_i=\bigvee_{s\in S_i}\mathbf 1\{d(t,c_{(i)})-\alpha d(s,c_{(i)})\ge0\}.
\end{equation}
Append $c_{(i)}$ exactly when $D_i=0$ and stop when the permutation is exhausted or the degree cap is reached.  The call is append-only and has no refill; for a fixed sorted permutation it is equivalent to repeatedly selecting the nearest remaining candidate and deleting the candidates dominated by a selected point.

For each exact visited state evaluate the approximate counterfactual action $\widehat D_i^*$ on the exact prefix $S_i$ and write $E_i=\{\widehat D_i^*\ne D_i\}$.  A first-divergence induction gives the deterministic identity
\begin{equation}
\{\widehat T_{\mathrm{sem}}\ne T_{\mathrm{sem}}\}=\{\widehat S_{\mathrm{out}}\ne S_{\mathrm{out}}\}=\bigcup_{i\in\mathcal V}E_i.
\label{eq:app-trace-identity}
\end{equation}
The implication from output difference back to some $E_i$ uses unique labels, append-only selection, and the absence of refill: the first action-disagreement candidate belongs to exactly one final set.

\subsection{Witness and conservative-event inclusions}

For $s\in S_i$ let
\begin{equation}
A_{i,s}=\bigl\{\mathbf 1\{\Psi_{i,s}\ge0\}\ne\mathbf 1\{\widehat\Psi_{i,s}\ge0\}\bigr\}.
\end{equation}
Then $E_i\subseteq\bigcup_{s\in S_i}A_{i,s}$, and the inclusion can be strict because another witness may preserve the OR.  If $\widehat\Psi=\gamma\Psi+R$ with $\gamma>0$, then for nonzero margins
\begin{equation}
A\cap\{\Psi\ne0\}\subseteq\{\Psi\ne0,\ \Psi\widehat\Psi\le0\}\subseteq\{\Psi\ne0,\ |R|\ge\gamma|\Psi|\},
\end{equation}
where the first inclusion can be strict when $\Psi>0$ and $\widehat\Psi=0$.  For exact ties, actual disagreement is $R<0$ and is added separately.  Consequently, after fixing the data and the exact trace,
\begin{align}
\Pp(\widehat S_{\mathrm{out}}\ne S_{\mathrm{out}})
\le\sum_{i\in\mathcal V}\sum_{s\in S_i}\bigl[&
\mathbf 1\{\Psi_{i,s}>0\}\Pp(-R_{i,s}\ge\gamma\Psi_{i,s})\nonumber\\
+&\mathbf 1\{\Psi_{i,s}<0\}\Pp(R_{i,s}\ge\gamma|\Psi_{i,s}|)\nonumber\\
+&\mathbf 1\{\Psi_{i,s}=0\}\Pp(R_{i,s}<0)\bigr],
\label{eq:app-trace-risk}
\end{align}
with no independence between comparisons.

\subsection{Edge, path, sorting, and saturation}

If the exact output edge $(t,c)$ is selected at position $j_c$, agreement of all candidate actions in the exact prefix through $j_c$ is sufficient for its survival.  Source-scoped prefix certificates for a fixed path may be union-bounded without cross-source independence; the resulting statement is about retention of the listed edges.

If approximate scoring changes the permutation, the ordering event is added:
\begin{equation}
\Pp(\text{output divergence})\le\Pp(\widehat\pi\ne\pi)+\sum_{i\in\mathcal V}\Pp(E_i).
\end{equation}
A deterministic saturation map that appends unselected candidates after RobustPrune satisfies
\begin{equation}
\{\text{saturated-output divergence}\}\subseteq\{\text{standard-trace divergence}\},
\end{equation}
and the reverse inclusion fails in general, because distinct diversity traces can saturate to the same list.  Saturation is therefore analysed as a post-processing map applied to the standard call.

\section{Gaussian Residual Transfer}
\label{app:p5}

\subsection{Oracle decomposition}

Let $G_u\overset{\mathrm{i.i.d.}}\sim\mathcal N(\mu_D,\Sigma_D)$ and $X_u=G_u/\|G_u\|$.  Define
\begin{equation}
T(g)=D^{-1}\sum_i|g_i|,\qquad q_i(g)=\sign(g_i)[1+\mathbf 1\{|g_i|>T(g)\}].
\end{equation}
Positive scale invariance gives $q_i(X)=q_i(G)$.  Replacing $T(G)$ by $\overline T_D=\E T(G)$ and $\|G\|$ by $r_D=\E\|G\|$ defines the oracle kernel, and for every fixed edge combination
\begin{equation}
R_a=R_a^\circ+\Delta_{a,\mathrm{thr}}+\Delta_{a,\mathrm{shell}}.
\end{equation}
The threshold remainder is supported on threshold deviation or coordinate boundary occupancy; the shell remainder is the difference between exact cosine normalization and the fixed-radius bilinear surrogate.

\subsection{Aligned magnitude-bit theorem}

Let $G,H$ be independent with independent coordinates $G_i,H_i\sim\mathcal N(0,\sigma_i^2)$.  Put $m=\sqrt{2/\pi}$, $z_i=\tau/\sigma_i$, $p_i=2[1-\Phi(z_i)]$, $k_i=m+2\varphi(z_i)$, and $v_i=1+3p_i$.  Coordinate independence gives
\begin{align}
\Var(T)&=\sum_i\sigma_i^4,&\Cov(S_1,T)&=m^2\sum_i\sigma_i^2,&\Var(S_1)&=D,\\
\Cov(S_2,T)&=\sum_i\sigma_i^2k_i^2,&\Var(S_2)&=\sum_iv_i^2.
\end{align}
For every finite $z>0$,
\begin{equation}
k(z)^2>m^2v(z).
\label{eq:app-magnitude-scalar}
\end{equation}
To see this write $a=e^{-z^2/2}$ and $h(z)=k(z)^2/m^2-v(z)=2a+a^2-6[1-\Phi(z)]$.  Then $h(0)=\lim_{z\to\infty}h(z)=0$ and $h'(z)=a\{3m-2z(1+a)\}$; since $2z(1+a)$ is strictly increasing, $h$ rises and then falls and stays strictly positive on $(0,\infty)$.

Now set $t_i=\sigma_i^2$ and regard $v_i=v(t_i)$.  Both $v(t)$ and $t/v(t)$ increase because $tv'(t)=3z\varphi(z)<1\le v(t)$.  Pairwise expansion and Cauchy--Schwarz give
\begin{equation}
\frac{\sum_it_iv_i}{\sum_it_i}\ge\frac{\sum_iv_i^2}{\sum_iv_i}\ge\sqrt{\frac1D\sum_iv_i^2},
\end{equation}
and combining this with \eqref{eq:app-magnitude-scalar} proves $\rho(S_2,T)>\rho(S_1,T)$.

The alignment between target and scorer is what makes the inequality hold.  In the linear-target model $T=\sum_iw_iG_i$ with a fixed equally weighted scorer, take $D=8$, $w=(1,1,1,1,0,0,0,0)$, coordinate standard deviations $(1,1,1,1,L,L,L,L)$, and $\tau=m(1+L)/2$.  For $L=8$ the two-bit minus one-bit Pearson correlation is $-0.162687\ldots$: the second bit amplifies four coordinates the target ignores, and a fixed readout loses.  For equal-variance independent coordinates and the random threshold $D^{-1}\sum_i|G_i|$, the strong law and dominated convergence recover the deterministic-threshold correlation as $D\to\infty$.

\subsection{Oracle covariance}

For the symmetric oracle kernel $H_D$ let
\begin{equation}
h_{1,D}(u)=\E[H_D(u,V)]-\theta_D,\qquad
h_{2,D}(u,v)=H_D(u,v)-\theta_D-h_{1,D}(u)-h_{1,D}(v),
\end{equation}
and $\kappa_j=\E h_j^2$.  Aggregate directed or repeated edges into unordered-pair coefficients $A_p$ and node incidences $d_u=\sum_{p\ni u}A_p$.  Hoeffding orthogonality gives
\begin{equation}
\Var(R_a^\circ)=\kappa_1\sum_ud_u^2+\kappa_2\sum_pA_p^2.
\label{eq:app-oracle-variance}
\end{equation}
Disjoint edges have zero covariance, edges sharing exactly one node have covariance $\kappa_1$, and identical or reversed edges have variance $2\kappa_1+\kappa_2$.

\subsection{Three analytical routes}

The coarse route combines a Gaussian quadratic-form MGF with a bounded score range; it needs no coordinate independence and is loose by four orders of magnitude on real embeddings (Appendix~\ref{app:experiments}).  The exact-covariance route keeps \eqref{eq:app-oracle-variance} and adds a cumulant or conditional-MGF condition.  The empirical route calibrates the fitted role residual on independent blocks.  The three routes bound the same residual tail with different inputs; the decision shell of Theorem~\ref{thm:main} accepts any of them.

\section{Held-Out Certificates}
\label{app:p6}

Condition on nuisance parameters fitted on an independent split.  Let $Z_1,\ldots,Z_n$ be independent blocks with arbitrary dependence among decisions inside a block.  For threshold $\tau$ let $u(Z,\tau)$ be the block-average indicator that a decision lies in the boundary event or the residual event, so that pointwise $0\le f(Z)\le u(Z,\tau)\le1$ with $f(Z)$ the block-average failure.  For fixed $\tau$, one-sided Hoeffding gives
\begin{equation}
p_F^{\mathrm{blk}}\le\widehat U_n(\tau)+\sqrt{\frac{\log(1/\delta)}{2n}}
\end{equation}
with probability at least $1-\delta$.  For a prespecified grid $\mathcal T$ of size $M$, a union bound replaces $\delta$ by $\delta/M$ and licenses any calibration-measurable minimizer $\widehat\tau\in\mathcal T$.

\subsection{Selective ratio certificates}
\label{app:p6-selective}

Condition on the fit split.  For selector $s$ and threshold $\tau$ let $C_{i,s}$, $U_{i,s,\tau}$, and $F_{i,s}$ be the block fractions accepted, accepted and in the union event, and accepted and failed; empty blocks contribute zero, and pointwise $0\le F_{i,s}\le U_{i,s,\tau}\le C_{i,s}\le1$.  The target is the ratio of expectations
\begin{equation}
p_s=\frac{\E F_{i,s}}{\E C_{i,s}},\qquad q_{s,\tau}=\frac{\E U_{i,s,\tau}}{\E C_{i,s}},
\end{equation}
which is the block-uniform selective risk, not the average of within-block ratios; it requires positive population coverage.

The proof of Theorem~\ref{thm:heldout} fixes $q=q_{s,\tau}$ and uses $W_i(q)=U_{i,s,\tau}-qC_{i,s}\in[-q,1-q]$ with $\E W_i(q)=0$.  The interval has length one, so a one-sided Hoeffding bound and a union bound over $G=|\mathcal G|$ pairs give
\begin{equation}
q_{s,\tau}\le\frac{\widehat U_{s,\tau}+\sqrt{\log(G/\delta)/(2n)}}{\widehat C_s}
\end{equation}
simultaneously whenever $\widehat C_s>0$.  Replacing $U$ by $F$ and paying only for the selector family proves Corollary~\ref{cor:direct-failure}.  No comparison-level independence is assumed.

The two-event baseline controls $\E U$ from above and $\E C$ from below separately.  With $\epsilon_2=\sqrt{\log(2G/\delta)/(2n)}$ it yields
\begin{equation}
q_{s,\tau}\le\frac{\widehat U_{s,\tau}+\epsilon_2}{\widehat C_s-\epsilon_2}
\end{equation}
when the denominator is positive; if only $S_0$ distinct selectors occur, $2G$ may be replaced by $G+S_0$.  Using $\log(G/\delta)$ for both separate events would only establish failure probability $2\delta$.  The direct-ratio event dominates: whenever its denominator is positive, the two-event expression is a looser consequence of \eqref{eq:certificate}.

\subsection{Empirical-Bernstein ratio inversion}

Let $n\ge2$, let $\widehat V_W(t)$ be the unbiased sample variance of $W_i(t)=U_i-tC_i$, and define
\begin{align}
L&=\log(2G/\delta),\qquad a=\sqrt{2L/n},\qquad b=\frac{7L}{3(n-1)},\\
H(t)&=\widehat U-t\widehat C+a\sqrt{\widehat V_W(t)}+b,\qquad t\in[0,1].
\end{align}
Set
\begin{equation}
B_{\mathrm{EB}}=
\begin{cases}
\inf\{t\in[0,1]:H(t)\le0\},&\text{if the set is nonempty},\\
1,&\text{otherwise}.
\end{cases}
\label{eq:app-eb-ratio}
\end{equation}
Then $q_{s,\tau}\le B_{\mathrm{EB},s,\tau}$ simultaneously over the grid with probability at least $1-\delta$.

At the population truth $q$, the range-one empirical Bernstein inequality gives $\Pp\{H(q)<0\}\le\delta/G$.  The function $H$ need not be monotone, but it has a single-crossing property.  Write $s_U=\sqrt{\widehat V_U}$ and $\eta=b+\widehat U-as_U$.  Since $0\le U_i\le1$,
\begin{equation}
s_U^2\le\frac{n}{n-1}\widehat U,\qquad as_U-\widehat U\le\frac{L}{2(n-1)},
\end{equation}
so $\eta\ge11L/[6(n-1)]>0$.  For $0<r<t\le1$ put $\lambda=t/r$.  From $W(t)=\lambda W(r)-(\lambda-1)U$ and the triangle inequality for sample standard deviations,
\begin{equation}
H(t)\le\lambda H(r)-(\lambda-1)\eta,
\end{equation}
so $H(r)\le0$ implies $H(t)<0$ for every $t>r$.  Since $H(0)>0$ and $H$ is continuous, the first crossing in \eqref{eq:app-eb-ratio} is a valid upper confidence endpoint and no grid over $t$ is needed.  Numerically, if $H(1)\ge0$ the implementation returns one; otherwise bisection returns the upper bracket endpoint.

\subsection{Independent selection and multiplicity}

If the block budget is split before observation into selection and certification blocks, a grid pair and certificate family may be chosen on the first part and, conditionally on that choice, certified on the second part with the single-policy width $\sqrt{\log(1/\delta)/(2n_{\mathrm{cert}})}$ and no grid factor.  Choosing again after viewing several certification results requires a common simultaneous guarantee.  Likewise each prespecified method and sample size carries a marginal guarantee, and selecting the smallest certificate across a curve requires a confidence allocation across the points eligible for selection.

\section{Cross-Quantizer Instantiations}
\label{app:p7}

\subsection{RaBitQ}

For a fixed unit data direction $o$ and query direction $q$, RaBitQ uses
\begin{equation}
\widehat s_R(o,q)=\frac{\langle\bar o(P),q\rangle}{\langle\bar o(P),o\rangle},
\end{equation}
whose denominator is positive for every orthogonal $P$ because $\langle\bar o(P),o\rangle=\|P^\top o\|_1/\sqrt D\ge1/\sqrt D$.  The random-radius event is measurable under a fixed tie rule.  A shared rotation couples the edges, so ranking and pruning statements go through the triangle inequality and a union bound over the joint event rather than through independence, and observing the realized radius does not create conditional coverage.  For raw squared distance,
\begin{equation}
\xi_R^{d^2}(u_r,v_r)=-2r_ur_v\,\varepsilon_R(u,v),
\end{equation}
and a pruning rule stated in Euclidean distance becomes an $\alpha^2$ rule after squaring, with the two edges keeping distinct radial factors.  A rule stated directly in squared distance or cosine dissimilarity keeps its own parameter.

\subsection{BBQ and block estimators}

Lucene BBQ defines asymmetric scorer roles that determine graph topology: stored vectors are one-bit, temporary int4 query-role vectors support graph construction, and candidate collection as well as diversity and reverse-link scoring each need role-specific residual calibration.  Our reference scorer implements the role semantics.  Product quantizers and related block estimators enter through block residual sums, whose survival laws come from bounded blocks, block covariance, or held-out calibration.  Native binary embeddings without a float teacher metric fall outside the latent-float residual formulation.

\section{Necessity and Replacement Concentration}
\label{app:p8}

\subsection{Rare-contamination counterexample}

Let $\varepsilon_D=D^{-4}$, $A_D=D^{3/2}$, and
\begin{equation}
G_D=(1-J_D)Z_D+J_DA_DS_D,
\end{equation}
where $J_D\sim\mathrm{Bernoulli}(\varepsilon_D)$, $Z_D\sim\mathcal N(0,I_D)$, and $S_D$ has independent Rademacher coordinates; independent nodes use independent copies.  Then $\Cov(G_D)=(1+D^{-1}-D^{-4})I_D$.  For every fixed coordinate block of size $k$, mixture decomposition and Gaussian scale comparison give total-variation distance $O_k(D^{-1})$ from the standardized Gaussian law.  With $r_D=\E\|G_D\|\sim\sqrt D$,
\begin{equation}
\E\left(\frac{\|G_D\|}{r_D}-1\right)^2=O(D^{-1}),
\end{equation}
a relative mean-square statement whose $L^2$ norm is $O(D^{-1/2})$.  At the population magnitude threshold the expected fraction of coordinates in a shrinking boundary band tends to zero.

Fix $0<\eta\le1/8$, $\beta_D=\eta/D$, and define the fixed-threshold oracle residual
\begin{align}
R_D&=Q_D+B_D,\\
Q_D&=\frac{G_1^\top(G_3-G_2)}{r_D^2},\\
B_D&=-\frac{\eta}{D}\sum_iq_i^\circ(G_1)\{q_i^\circ(G_3)-q_i^\circ(G_2)\}.
\end{align}
Conditioning on the three contamination indicators and combining the two shared-endpoint edges coordinatewise gives independent centred coordinate summands, whence
\begin{equation}
\E B_D^2\le32\eta^2/D,\qquad\E B_D^4\le3072\eta^4/D^2,
\end{equation}
while $\Var(Q_D)=2D\sigma_D^4/r_D^4\sim2/D$.  The $L^2$ reverse triangle inequality gives $V_D=\Var(R_D)=\Theta(D^{-1})$.

On $\mathcal A_D=\{J_1=J_2=1,J_3=0\}$, whose probability is $\varepsilon_D^2(1-\varepsilon_D)=\Theta(D^{-8})$,
\begin{equation}
Q_D=\frac{A_DS_1^\top Z_3-A_D^2S_1^\top S_2}{r_D^2}.
\end{equation}
Since $\E(S_1^\top S_2)^4=3D^2-2D$,
\begin{equation}
\E[Q_D^4;\mathcal A_D]\ge\frac{\varepsilon_D^2(1-\varepsilon_D)A_D^8(3D^2-2D)}{r_D^8}=(3+o(1))D^2.
\end{equation}
The $L^4$ reverse triangle inequality and the bound on $B_D$ give $\E R_D^4=\Omega(D^2)$, hence
\begin{equation}
\kappa_4(R_D)=\Omega(D^2),\qquad\kappa_4(R_D)/V_D^2=\Omega(D^4).
\end{equation}

For a cumulant condition $|\kappa_m(R_D)|\le(m!/2)V_Db_D^{m-2}$, the case $m=4$ forces $b_D/\sqrt{V_D}=\Omega(D^2)$.  For a two-sided MGF bound
\begin{equation}
\log\E e^{\lambda X}\le\frac{\lambda^2v}{2(1-b|\lambda|)},\qquad|\lambda|<1/b,
\end{equation}
evaluating both signs at $\lambda=[2\max\{\sqrt v,b\}]^{-1}$ and using $\cosh y-1\ge y^4/24$ gives $\E X^4\le192v\max\{v,b^2\}$, so $v_D=O(V_D)$ again forces $b_D/\sqrt{V_D}=\Omega(D^2)$.  The construction uses a fixed threshold, fixed normalization, and the oracle slope; it is a logical obstruction, and \S\ref{sec:exp-eligibility} measures how far real embeddings sit from it.

\subsection{Replacement variance}

For the Hoeffding decomposition of $F_a$, independent replacement of node $u$ gives
\begin{equation}
\tfrac12\E(F_a-F_a^{(u)})^2=\E\Var(F_a\mid G_{-u}).
\end{equation}
Summing over nodes counts every first-order component once and every degenerate pair component twice,
\begin{equation}
V_{\mathrm{ES}}=\kappa_1\sum_ud_u^2+2\kappa_2\sum_pA_p^2,
\end{equation}
hence $V\le V_{\mathrm{ES}}\le2V$ with equality at two when $h_1=0$ and $h_2\ne0$.

\subsection{Conditional MGF and stable truncation}

Let $D_j$ be the three Doob increments.  If deterministic $v_j,b_j$ satisfy
\begin{equation}
\log\E[e^{\lambda D_j}\mid\mathcal F_{j-1}]\le\frac{\lambda^2v_j}{2(1-b_j|\lambda|)},
\end{equation}
iterated conditioning gives the tail \eqref{eq:conditional-mgf-tail} with $v_\star=\sum_jv_j$ and $b_\star=\max_jb_j$.  The expectation of a replacement proxy alone cannot be inserted into Freedman's inequality; the conditional bounds must hold almost surely.

A truncation route constructs a functional $F_L$ on the full input space with
\begin{equation}
\Pp(F_L\ne F_a)\le\varepsilon_L,\qquad|\E F_a-\E F_L|\le d_L.
\end{equation}
If $F_L$ has a sub-gamma tail with $(v_L,b_L)$, then
\begin{equation}
\Pp(|F_a-\E F_a|\ge z+d_L)\le2\exp\left[-\frac{z^2}{2(v_L+b_Lz)}\right]+\varepsilon_L.
\end{equation}
Bounding replacements only on a good event suffices only when that event is stable under every single-node replacement, which is what the construction of $F_L$ on the full space provides.

\subsection{Negative component covariance}

For zero-mean Gaussian nodes of arbitrary covariance define $M_{ij}=\E[q_i^\circ(G)G_j]$.  Endpoint independence gives
\begin{equation}
\Cov(B_a,Q_a)=-\frac{\beta_1}{r_D^2}\Bigl(\sum_pA_p^2\Bigr)\|M\|_F^2\le0,
\end{equation}
with no symmetry or positive-semidefiniteness of $M$ required.  At nonzero mean, first-order projections reappear and the sign depends on separate cross-Hoeffding conditions.

\section{Datasets, Protocols, and Experiment Inventory}
\label{app:experiments}

\begin{table}[t]
\centering
\caption{Representation regimes.  Each result file records the subset of datasets it uses.}
\label{tab:app-datasets}
\small
\begin{tabular}{lL{4.8cm}cl}
\toprule
Group & Datasets & $D$ & Role \\
\midrule
Contrastive text & Cohere, MiniLM, BGE-M3, Jina, MSMARCO & 384--1024 & primary regimes \\
Vision & Wolt-CLIP, Landmark-DINO & 512--768 & modality transfer \\
Classical & GloVe, SIFT, GIST & 100--960 & negative controls \\
Synthetic & Gaussian, sphere, random, contaminated & 64--2048 & mechanism and stress \\
\bottomrule
\end{tabular}
\end{table}

Every experiment draws a bounded sample from a memory-mapped source artifact rather than loading a full million-vector file.  Calibration, validation, reference-integration, and stress pools are disjoint whenever the estimand requires it.  Table~\ref{tab:app-inventory} maps each experiment in the paper to its script and result file; script numbers refer to the numbered modules of the code release, and result files are named by experiment number with a suffix identifying the standard non-saturated RobustPrune rule where it is involved.

\begin{table}[t]
\centering
\caption{Experiment inventory.}
\label{tab:app-inventory}
\small
\begin{tabular}{lL{4.9cm}ll}
\toprule
Section & Object & Scripts & Results \\
\midrule
\S\ref{sec:exp-margins} & margin bins, boundary exponent, held-out predictor & 07, 08, 16 & 24, 25, 33 \\
\S\ref{sec:exp-covariance} & shared-query covariance, selection regimes & 09, 29 & 26, 46 \\
\S\ref{sec:exp-trace} & pruning replay, trace, edge, path, refill variant & 10--14 & 27--31 \\
\S\ref{sec:exp-certificates} & selective certificates, sample-size curves & 17, 28 & 34, 45 \\
\S\ref{sec:exp-eligibility} & joint Gaussianity, oracle remainders, variance proxies & 15, 19, 21, 22 & 32, 36, 38, 39 \\
\S\ref{sec:exp-cm} & conditional-MGF and truncation parameters & 23 & 40 \\
\S\ref{sec:exp-necessity} & margin-zero and matched-Spearman constructions & 18 & 35 \\
\S\ref{sec:instantiations} & cross-quantizer interface & 20 & 37 \\
\S\ref{sec:bridge} & representation diagnostics, magnitude bit, rotation & 01--06 & 1--23 \\
\bottomrule
\end{tabular}
\end{table}

\section{Representation Diagnostics}
\label{app:foundational}

The diagnostics in this section motivated the Gaussian model of \S\ref{sec:bridge} and the eligibility gate.

\begin{table}[t]
\centering
\caption{Geometry diagnostics.  $d_{90}$ is the PCA dimension explaining 90\% of the variance; $\Delta\theta_{\mathrm{NN}}$ is a representative nearest-neighbour angular gap.}
\label{tab:app-foundational-geometry}
\small
\begin{tabular}{lrrrrr}
\toprule
Dataset & $D$ & $d_{90}/D$ & Mean angle & NN angle & $\Delta\theta_{\mathrm{NN}}$ \\
\midrule
Cohere & 768 & .309 & $45.8^\circ$ & $35.4^\circ$ & $.125^\circ$ \\
BGE-M3 & 1024 & .091 & $55.4^\circ$ & $34.5^\circ$ & $.248^\circ$ \\
MiniLM & 384 & .497 & $88.8^\circ$ & $65.6^\circ$ & $.277^\circ$ \\
GIST & 960 & .181 & $40.7^\circ$ & $28.9^\circ$ & $.091^\circ$ \\
Random & 768 & .268 & $87.6^\circ$ & $67.9^\circ$ & $.117^\circ$ \\
\bottomrule
\end{tabular}
\end{table}

Intrinsic dimension and average angle do not separate the regimes: GIST has the most concentrated geometry and the smallest local angular gap yet is the worst coordinate-sign regime, while random vectors have regular high-dimensional geometry and no usable neighbour structure.  The final theory conditions on decision margins and residual placement instead.

\begin{table}[t]
\centering
\caption{Coordinate-sign versus random-hyperplane diagnostics.  GW is the angular collision prediction and $H_{\mathrm{sign}}$ the normalized sign entropy.}
\label{tab:app-foundational-sign}
\small
\begin{tabular}{lrrrrr}
\toprule
Dataset & Coord.\ sign & Random HP & GW & KL(coord$\|$HP) & $H_{\mathrm{sign}}$ \\
\midrule
Cohere & .650 & .744 & .746 & .0229 & .747 \\
BGE-M3 & .677 & .701 & .692 & .0022 & .700 \\
MiniLM & .508 & .506 & .506 & .0021 & .987 \\
GIST & .9999 & .768 & .774 & .2648 & .0004 \\
Random & .513 & .513 & .513 & .0010 & .981 \\
\bottomrule
\end{tabular}
\end{table}

The GIST row is the key negative control: almost all coordinate signs agree regardless of angular relation, whereas random hyperplanes keep an angular collision law.  This is the origin of both the rotation mechanism and the eligibility gate.

\begin{table}[t]
\centering
\caption{Anisotropic Gaussian sign-entropy model.  Sign entropy is predicted from coordinate SNR; $R^2$ is unstable when entropy has almost no variance across coordinates, so MAE is reported alongside.}
\label{tab:app-foundational-gaussian}
\small
\begin{tabular}{lrrrrr}
\toprule
Dataset & $D$ & $R^2$ & MAE & Mean $|\mathrm{SNR}|$ & Measured / predicted entropy \\
\midrule
Cohere & 768 & .9996 & .004 & .884 & .747 / .748 \\
BGE-M3 & 1024 & .9989 & .007 & .908 & .700 / .699 \\
MiniLM & 384 & .9222 & .002 & .127 & .987 / .988 \\
Random & 768 & .9967 & .001 & .164 & .981 / .982 \\
GIST & 960 & failure & .240 & 1.763 & .000 / .240 \\
MSMARCO-Cohere & 1024 & .9995 & .0019 & .370 & .910 / .910 \\
\bottomrule
\end{tabular}
\end{table}

In the same-dimensional survey of nine 768-dimensional regimes, eight had $R^2\ge.9965$; the near-isotropic RoBERTa regime had low $R^2$ because its entropy variance is close to zero, with MAE $.00166$.  These are coordinate-wise checks; the joint Gaussianity audit of \S\ref{sec:exp-eligibility} is the stronger test.

\section{Representation Mechanisms}
\label{app:mechanisms}

Tables~\ref{tab:app-magnitude} to~\ref{tab:app-rotation-drivers} collect the representation-level measurements behind \S\ref{sec:bridge}: the fidelity and recall gain of the magnitude bit, a controlled intervention on coordinate heterogeneity, the rotation response of the two-bit code, and the predictors of that response.

\begin{table}[t]
\centering
\caption{Magnitude-bit gain in global ranking fidelity and simulated recall across representations.}
\label{tab:app-magnitude}
\small
\begin{tabular}{lrrr}
\toprule
Dataset & $\mathrm{CV}(\sigma)$ & $\Delta F$ & $\Delta$ Recall \\
\midrule
MiniLM & .118 & +.132 & +.174 \\
Cohere & .182 & +.091 & +.144 \\
CodeSearch & .098 & +.088 & +.143 \\
Landmark & .110 & +.088 & +.110 \\
Arxiv & .108 & +.071 & +.159 \\
Random & .070 & +.049 & +.210 \\
\bottomrule
\end{tabular}
\end{table}

\begin{table}[t]
\centering
\caption{Controlled coordinate-heterogeneity intervention on Cohere.  Absolute fidelity and the magnitude-bit gain move in opposite directions.}
\label{tab:app-intervention}
\small
\begin{tabular}{lrrrr}
\toprule
Intervention & $\mathrm{CV}(\sigma)$ & $F_{1\mathrm{bit}}$ & $F_{2\mathrm{bit}}$ & $\Delta F_{\mathrm{mag}}$ \\
\midrule
Amplify $\gamma=2$ & .316 & .682 & .719 & .036 \\
Amplify $\gamma=1.5$ & .252 & .694 & .728 & .034 \\
Original & .182 & .706 & .738 & .032 \\
Whiten $.50$ & .091 & .720 & .746 & .026 \\
Flatten & .002 & .736 & .757 & .022 \\
\bottomrule
\end{tabular}
\end{table}

\begin{table}[t]
\centering
\caption{Rotation responses of the two-bit code.  Recall changes follow the original experiment reports.}
\label{tab:app-rotation}
\small
\begin{tabular}{lrrl}
\toprule
Dataset & Sign entropy before $\to$ after & Recall response & Regime \\
\midrule
GIST & .000 $\to$ .511 & +307\% & degenerate signs \\
Wolt-CLIP & .836 $\to$ .616 & +3.2 pp & over-spread \\
Cohere & .747 $\to$ .563 & $-0.5$ pp & coordinate signal \\
MiniLM & approximately unchanged & approximately zero & near-isotropic \\
\bottomrule
\end{tabular}
\end{table}

\begin{table}[t]
\centering
\caption{Predictors of the global two-bit rotation response $\Delta F_{\mathrm{rot}}$ over 12 datasets.}
\label{tab:app-rotation-drivers}
\small
\begin{tabular}{lrr}
\toprule
Predictor & Spearman $\rho$ & $p$ \\
\midrule
$\mathrm{CV}(\sigma)$ & $+.657$ & $.020$ \\
$H_{\mathrm{sign}}$ & $-.909$ & $<.001$ \\
$1-H_{\mathrm{sign}}$ & $+.909$ & $<.001$ \\
$(1-H_{\mathrm{sign}})\times\mathrm{CV}(\sigma)$ & $\mathbf{+.930}$ & $<.001$ \\
$\log D$ & $-.080$ & $.805$ \\
\bottomrule
\end{tabular}
\end{table}

In Table~\ref{tab:app-rotation-drivers}, excluding the one non-Gaussian outlier (Landmark-Nomic) raises the entropy-gap correlation to $.964$ and the heterogeneity correlation to $.746$; the response is independent of dimension.  In the intervention study of Table~\ref{tab:app-intervention}, the correlation between heterogeneity and magnitude gain was $+1.0$ in four of five datasets.

\paragraph{Rotation response and eligibility are different questions.}
Cohere ($H_{\mathrm{sign}}=.747$, $\Delta F_{\mathrm{rot}}=+.170$) and SIFT ($H_{\mathrm{sign}}=.746$, $\Delta F_{\mathrm{rot}}=+.200$) are nearly identical on both statistics, yet coordinate binary quantization is usable on Cohere and not on SIFT.  The entropy gap predicts how much rotation changes a code; the eligibility gate (calibration slope, margin correlation, residual tail) decides whether the unrotated code was usable at all.  This pair of datasets is why the paper reports two diagnostics rather than one routing statistic.

\section{Local Ranking and Boundary Diagnostics}
\label{app:ranking}

\begin{table}[t]
\centering
\caption{Fixed-local-set decision study on 32-neighbour candidate sets.  The eligibility column is the empirical gate of \S\ref{sec:heldout}.}
\label{tab:app-probe24}
\small
\begin{tabular}{llrrrr}
\toprule
Dataset & Quantizer & Eligible & Flip & Query failure & Boundary exponent \\
\midrule
Cohere & 1-bit & yes & 8.89\% & 63.12\% & 1.526 \\
Cohere & 2-bit & yes & 5.83\% & 47.50\% & 1.526 \\
MiniLM & 1-bit & yes & 11.21\% & 59.38\% & 1.500 \\
MiniLM & 2-bit & yes & 4.19\% & 43.75\% & 1.500 \\
GIST & 1-bit & no & 100.00\% & 100.00\% & 1.595 \\
GIST & 2-bit & no & 35.83\% & 91.88\% & 1.595 \\
\bottomrule
\end{tabular}
\end{table}

\begin{table}[t]
\centering
\caption{Standardized-margin bins, $n=4{,}960$ decisions per configuration, with counts in parentheses; Table~\ref{tab:margin-bins} pools these into six bins.  On the ineligible GIST configuration the flip rate is flat in the margin, so no cutoff separates safe from unsafe decisions.}
\label{tab:app-margin-bins}
\small
\begin{tabular}{lrrrrr}
\toprule
Bin & Cohere 1-bit & Cohere 2-bit & MiniLM 1-bit & MiniLM 2-bit & GIST 2-bit \\
\midrule
$[0,.25)$    & 33.3\% (93)  & 48.6\% (70)   & 48.1\% (104) & 50.0\% (72)   & 41.1\% (158) \\
$[.25,.5)$   & 31.3\% (150) & 35.3\% (102)  & 36.0\% (125) & 39.6\% (53)   & 37.5\% (301) \\
$[.5,.75)$   & 30.2\% (182) & 24.8\% (121)  & 37.4\% (227) & 29.5\% (78)   & 39.2\% (551) \\
$[.75,1)$    & 21.3\% (286) & 20.3\% (158)  & 24.7\% (312) & 23.9\% (138)  & 35.7\% (762) \\
$[1,1.5)$    & 13.2\% (832) & 11.9\% (520)  & 15.8\% (991) & 11.5\% (355)  & 28.2\% (1389) \\
$[1.5,2)$    & 8.49\% (1083)& 7.68\% (690)  & 10.2\% (1064)& 5.95\% (571)  & 34.2\% (803) \\
$[2,3)$      & 3.02\% (1458)& 2.54\% (1577) & 2.53\% (1227)& 1.29\% (1393) & 40.8\% (622) \\
$[3,5)$      & 0.12\% (804) & 0.14\% (1413) & 0.28\% (718) & 0.13\% (1531) & 42.9\% (231) \\
$[5,\infty)$ & 0.00\% (72)  & 0.00\% (309)  & 0.00\% (192) & 0.00\% (769)  & 63.6\% (143) \\
\midrule
Share $M\ge2$ & 47.1\% & 66.5\% & 43.1\% & 74.5\% & 20.1\% \\
Share $M<1$   & 14.3\% & 9.1\% & 15.5\% & 6.9\% & 35.7\% \\
\bottomrule
\end{tabular}
\end{table}

The last two rows are the operational summary: on eligible two-bit configurations, two thirds to three quarters of decisions lie in the low-risk region $M\ge2$ and 7 to 9\% fall below $M=1$.

\begin{table}[t]
\centering
\caption{Shared-query covariance audit on 32-neighbour anchor pairs.  ``Ind./true'' is the independence variance divided by the observed difference variance.}
\label{tab:app-probe26}
\small
\begin{tabular}{llrrrr}
\toprule
Dataset & Quantizer & Flip & Residual corr. & Ind./true & Tail gate \\
\midrule
Cohere & 1-bit & 8.89\% & .794 & 4.71 & pass \\
Cohere & 2-bit & 5.83\% & .870 & 7.42 & pass \\
Cohere & rotated 1-bit & 15.58\% & .465 & 1.87 & pass \\
Cohere & rotated 2-bit & 8.81\% & .435 & 1.77 & pass \\
MiniLM & 1-bit & 11.21\% & .047 & 1.05 & pass \\
MiniLM & 2-bit & 4.19\% & .093 & 1.10 & pass \\
GIST & 1-bit & 100.00\% & .902 & 8.89 & abstain \\
GIST & 2-bit & 35.83\% & .832 & 5.70 & abstain \\
GIST & rotated 1-bit & 14.80\% & .585 & 2.34 & pass \\
GIST & rotated 2-bit & 9.33\% & .529 & 2.09 & pass \\
\bottomrule
\end{tabular}
\end{table}

\section{Pruning and Trace Composition}
\label{app:trace}

\begin{table}[t]
\centering
\caption{Standard non-saturated RobustPrune replay with exact candidate pools ($\alpha=1.2$, $M=32$, 64 candidates, 160 targets).  Free-order divergence includes the sorting change.}
\label{tab:app-probe27}
\small
\begin{tabular}{llrrrr}
\toprule
Dataset & Quantizer & Local flip & Fixed-order divergence & Fixed Jaccard & Free Jaccard \\
\midrule
Cohere & 2-bit & 13.06\% & 100.0\% & .322 & .395 \\
Cohere & rotated 2-bit & 13.07\% & 100.0\% & .322 & .289 \\
MiniLM & 2-bit & 1.92\% & 100.0\% & .654 & .529 \\
MiniLM & rotated 2-bit & 1.92\% & 100.0\% & .656 & .529 \\
GIST & 2-bit & 7.91\% & 99.4\% & .458 & .260 \\
GIST & rotated 2-bit & 7.90\% & 99.4\% & .458 & .359 \\
\bottomrule
\end{tabular}
\end{table}

Every observed triple disagreement is a false keep.  Fixed-order output divergence occurs on 99.4 to 100\% of sampled calls although local triple disagreement rates are 1.9 to 13.1\%, as the first-divergence theorem permits: one candidate-level action disagreement changes an append-only output.  In this experiment rotation leaves the fixed-order pruning statistics nearly unchanged, while the free-order replay also reflects candidate re-sorting.

\begin{table}[t]
\centering
\caption{Exact-prefix edge certificates under standard RobustPrune (two-bit codes).  ``Saturation'' is the fraction of plug-in additive bounds at least one; the last column is the Spearman correlation between the plug-in risk sum and observed edge failure.}
\label{tab:app-edge-cert}
\small
\begin{tabular}{lrrrrr}
\toprule
Dataset & Edges & Mean prefix terms & Fixed-order failure & Saturation & Risk--failure $\rho$ \\
\midrule
Cohere & 3{,}054 & 24.7 & 34.3\% & 89.0\% & .81 \\
MiniLM & 5{,}118 & 21.0 & 21.5\% & 89.4\% & .69 \\
GIST & 3{,}701 & 21.7 & 24.7\% & 91.6\% & .73 \\
\bottomrule
\end{tabular}
\end{table}

The implication ``certificate passes $\Rightarrow$ edge survives'' held in every tested case.  The numerical saturation reflects the conservativeness of witness-level additive probabilities, while the risk sum still orders edges by their observed failure.

\begin{table}[t]
\centering
\caption{Saturating refill applied after the standard call ($\alpha=1.2$, $M=32$).  The refill is a post-processing variant and not part of RobustPrune.}
\label{tab:app-fallback-alpha}
\small
\begin{tabular}{lrrr}
\toprule
Dataset & Diversity-selected & Refill-added & Refill share \\
\midrule
Cohere & 21.07 & 10.93 & 34.14\% \\
MiniLM & 32.00 & 0.00 & 0.00\% \\
GIST & 22.80 & 9.20 & 28.75\% \\
\bottomrule
\end{tabular}
\end{table}

\begin{table}[t]
\centering
\caption{Fixed-order diversity-path composition for the unrotated two-bit scorer, 600 paths per length.  ``Fail'' is the observed path failure rate and ``sat.'' the fraction of saturated plug-in certificates, both in percent.}
\label{tab:app-paths}
\small
\begin{tabular}{lrrrrrrrr}
\toprule
& \multicolumn{2}{c}{1 hop} & \multicolumn{2}{c}{2 hops} & \multicolumn{2}{c}{3 hops} & \multicolumn{2}{c}{4 hops}\\
Dataset & fail & sat. & fail & sat. & fail & sat. & fail & sat. \\
\midrule
Cohere & 33.8 & 88.2 & 50.0 & 98.8 & 68.2 & 99.8 & 78.7 & 100.0 \\
MiniLM & 17.7 & 89.5 & 35.5 & 98.3 & 46.3 & 99.8 & 58.7 & 100.0 \\
GIST & 24.0 & 84.7 & 46.0 & 98.7 & 60.8 & 99.8 & 72.3 & 100.0 \\
\bottomrule
\end{tabular}
\end{table}

Path certificates saturate within two hops, and the observed path failure rate grows with length as the union of its edge failures.  The certificate sufficiency implication held on every path.

\section{Analytical Eligibility and Variance Proxies}
\label{app:eligibility}

\begin{table}[t]
\centering
\caption{Oracle remainder and covariance audit (two-bit codes).  ``Oracle error'' is the relative error of the incidence-covariance prediction against the measured role variance for ranking / pruning; coarse looseness is the coarse proxy divided by the empirical oracle role variance.}
\label{tab:app-p5}
\small
\begin{tabular}{llrrrr}
\toprule
Dataset & Coordinates & Mag.\ flips & Remainder share & Oracle error (R / P) & Coarse loose.\ \\
\midrule
Cohere & original & 1.834\% & 4.66\% & 3.1\% / 4.6\% & 11{,}420 \\
Cohere & rotated & .313\% & 2.09\% & 0.7\% / 2.1\% & 51{,}130 \\
MiniLM & original & .509\% & 1.31\% & 0.3\% / 1.0\% & 13{,}876 \\
MiniLM & rotated & .516\% & 1.37\% & 0.9\% / 1.5\% & 13{,}969 \\
GIST & original & 3.446\% & 100.73\% & 0.0\% / 0.7\% & 19{,}016 \\
GIST & rotated & .267\% & 2.30\% & 2.3\% / 1.4\% & 77{,}414 \\
\bottomrule
\end{tabular}
\end{table}

The MiniLM and GIST source artifacts are stored unit-normalized, so their pre-normalization shell is not identifiable; Cohere retains radial information, with norm coefficient of variation $4.4\%$ and $2.03\%$ of vectors beyond a 10\% relative shell.

\begin{table}[t]
\centering
\caption{Looseness of the variance proxies over 28 dataset--coordinate--role configurations (seven datasets, two coordinate systems, two roles).}
\label{tab:app-proxies}
\small
\begin{tabular}{lrrrr}
\toprule
Proxy & Minimum & Median & Mean & Maximum \\
\midrule
Coarse range & 403 & 22{,}044 & 27{,}109 & 73{,}572 \\
Componentwise Cauchy--Schwarz & 5.91 & 40.84 & 47.59 & 93.13 \\
Replacement (Efron--Stein) & 1.21 & 1.67 & 1.72 & 2.04 \\
\bottomrule
\end{tabular}
\end{table}

All 28 component cross-covariances are negative, with median cancellation ratio $.947$.  In dimension stress tests from $D=64$ to $2048$, clean, equicorrelated, and block-correlated Gaussian regimes keep the oracle $b_\star/\mathrm{sd}$ near $.19$ to $.25$; a dense global shock raises it towards $.88$ and is detected at the same time by the Gaussian, shell, and spectrum diagnostics.  The rare-contamination construction of Theorem~\ref{thm:counterexample} is designed to pass those diagnostics, which is what makes it the relevant obstruction.

\section{Held-Out Certificates}
\label{app:heldout}

\begin{table}[t]
\centering
\caption{Five-pool selective certificate protocol at 512 certification blocks ($\delta=.05$).  ``No exceedance'' counts configurations whose validation risk stayed below the frozen certificate.}
\label{tab:app-p6}
\small
\begin{tabular}{lrrrrrr}
\toprule
Role & Issued & No exceedance & Retained & Validation risk & Certificate & Stress risk \\
\midrule
Ranking & 24 & 24 & 66.54\% & .370\% & 19.76\% & 13.02\% \\
Pruning & 24 & 24 & 67.26\% & .848\% & 22.75\% & 21.47\% \\
\bottomrule
\end{tabular}
\end{table}

This protocol uses 128 coefficient-fit, 512 risk-calibration, 512 validation, 256 stress-cutoff, and 512 stress-evaluation blocks per configuration.  The prespecified grid has five selector cutoffs and seven thresholds plus the accept-all case, so $|\mathcal G|=40$; with $n=512$ the width is $\epsilon_n=.0808$.  The certificate is the two-event form $(\widehat U+\epsilon_n)/(\widehat C-\epsilon_n)$, which Appendix~\ref{app:p6-selective} shows to be a looser consequence of the direct-ratio event whenever its denominator is positive.  Ranking certificates range from 11.23 to 58.25\% and pruning certificates from 11.17 to 46.46\%.  The stress columns evaluate the frozen cutoff on the hardest fifth of margins from a separate pool.

\begin{table}[t]
\centering
\caption{Selective certificate comparison over prespecified sample sizes ($\delta=.05$).  Each entry averages 24 dataset--quantizer configurations; each method optimizes its own policy.  Validation coverage and risk are measured on 1{,}024 independent blocks.  Entries are ranking / pruning.}
\label{tab:app-d-certificates}
\small
\begin{tabular}{llrrr}
\toprule
Blocks & Method & Certificate & Validation coverage & Validation risk \\
\midrule
128 & two-event Hoeffding & 40.22\% / 43.96\% & 76.8\% / 73.4\% & 0.99\% / 1.91\% \\
128 & direct union & 29.08\% / 31.32\% & 70.4\% / 68.4\% & 0.45\% / 0.94\% \\
128 & empirical-Bernstein union & 28.60\% / 29.52\% & 67.9\% / 68.4\% & 0.37\% / 0.94\% \\
128 & direct failure & 16.29\% / 18.62\% & 95.1\% / 88.5\% & 2.18\% / 3.31\% \\
\midrule
256 & two-event Hoeffding & 28.58\% / 31.39\% & 70.4\% / 70.1\% & 0.45\% / 1.26\% \\
256 & direct union & 22.26\% / 24.07\% & 66.2\% / 67.6\% & 0.35\% / 0.86\% \\
256 & empirical-Bernstein union & 16.44\% / 17.64\% & 62.0\% / 61.0\% & 0.26\% / 0.75\% \\
256 & direct failure & 12.18\% / 13.96\% & 93.5\% / 82.7\% & 1.95\% / 2.32\% \\
\midrule
512 & two-event Hoeffding & 20.59\% / 22.95\% & 67.0\% / 67.6\% & 0.42\% / 0.86\% \\
512 & direct union & 16.74\% / 18.65\% & 64.6\% / 63.5\% & 0.33\% / 0.75\% \\
512 & empirical-Bernstein union & 9.49\% / 10.91\% & 61.2\% / 56.8\% & 0.26\% / 0.72\% \\
512 & direct failure & 9.07\% / 10.45\% & 87.8\% / 80.2\% & 1.34\% / 2.01\% \\
\midrule
1{,}024 & two-event Hoeffding & 15.13\% / 17.36\% & 63.7\% / 61.0\% & 0.32\% / 0.75\% \\
1{,}024 & direct union & 12.74\% / 14.65\% & 61.2\% / 61.0\% & 0.26\% / 0.75\% \\
1{,}024 & empirical-Bernstein union & 6.13\% / 7.27\% & 57.7\% / 51.1\% & 0.22\% / 0.44\% \\
1{,}024 & direct failure & 6.75\% / 7.99\% & 81.9\% / 75.2\% & 0.94\% / 1.44\% \\
\bottomrule
\end{tabular}
\end{table}

All 48 role configurations satisfy the block-level invariant $0\le F_i\le U_i\le C_i\le1$, and every prespecified method and sample size showed validation risk below its certificate on all configurations.  Certificate values decrease as the number of certification blocks grows; validation risks change little, while each method may select a different policy at each sample size.  The ranking failure label uses the same lexicographic tie-breaking as candidate ordering, and pruning uses the nonnegative-is-dominated convention.

\section{Candidate-Selection Covariance}
\label{app:selection}

This experiment holds one global affine edge scorer fixed and varies the pair-generation law, using 6{,}000 sampled vectors, 60{,}000 independent fit pairs, 160 query blocks, and 128 candidate pairs per query.  The i.i.d.\ control samples all three endpoint identities independently with replacement; a distinct-triple control rejects identity collisions.

\begin{table}[t]
\centering
\caption{Pooled residual correlation across candidate-selection regimes.}
\label{tab:app-selection-covariance}
\small
\begin{tabular}{llrrrrrr}
\toprule
Dataset & Scorer & i.i.d. & Distinct & Random & Top-256 & Top-32 & Anchor \\
\midrule
Cohere & 1-bit & .361 & .364 & .312 & .524 & .525 & .544 \\
Cohere & 2-bit & .391 & .393 & .346 & .571 & .599 & .613 \\
Cohere & rotated 2-bit & .259 & .259 & .212 & .324 & .340 & .364 \\
MiniLM & 2-bit & .010 & .010 & .020 & .067 & .111 & .118 \\
GIST & 1-bit & .386 & .387 & .370 & .925 & .934 & .870 \\
GIST & 2-bit & .405 & .405 & .422 & .576 & .640 & .616 \\
\bottomrule
\end{tabular}
\end{table}

In every regime the weighted within/between identity \eqref{eq:app-anova-covariance} reproduces the pooled covariance and the difference variance to floating-point precision.  Query-block bootstrap intervals accompany the grouped regimes, and the 5th, 50th, and 95th percentiles of query-specific within covariance are retained because a small weighted average need not describe every query.  The anchor's left residual is constant within a query, so its within-query correlation is reported as undefined.

\section{Cross-Quantizer and Conditional-MGF Audits}
\label{app:cross}

Table~\ref{tab:cross-quantizer} averages over Cohere, MiniLM, SIFT, and GIST.  Each family uses 1{,}200 fit vectors and 2{,}000 held-out vectors per dataset, generating 60{,}000 ranking and 60{,}000 pruning triples.  The RaBitQ reference uses a seeded signed-DCT orthogonal transform without query scalar quantization; the BBQ-like reference implements centroid-centred one-bit storage with an int4 query role and omits Lucene's metric-specific corrections; the PQ reference uses 16 subspaces with 16 centroids each, for 64 index bits per vector excluding codebooks.

\begin{table}[t]
\centering
\caption{Conditional-MGF and stable-clipping parameters over 24 role configurations (six datasets, two quantizers, two roles).}
\label{tab:app-cm}
\small
\begin{tabular}{lrrr}
\toprule
Quantity & Minimum & Median & Maximum \\
\midrule
Untruncated $b_\star/\mathrm{sd}$ & .133 & .158 & .324 \\
Doob variance sum / total variance & 1.047 & 1.086 & 1.102 \\
Validation / calibration variance & .945 & .994 & 1.075 \\
Clipped $b_\star/\mathrm{sd}$ (1\% exceptional mass) & .130 & .147 & .258 \\
\bottomrule
\end{tabular}
\end{table}

The four disjoint pools contain 900 fit, 1{,}200 reference, 1{,}200 calibration, and 1{,}200 validation vectors.  Nested integration uses 160 first-node groups, 16 second nodes, 12 observed third nodes, and separate reference completions; clipping cutoffs are the .95, .975, and .99 calibration quantiles.  No held-out tail exceeded its hybrid bound.

\section{Negative Controls}
\label{app:negative}

\begin{table}[t]
\centering
\caption{Matched-Spearman construction ($n=500{,}000$).  The two perturbations share the exact scores and the same rank fidelity; mean squared error orders them the wrong way.}
\label{tab:app-counterexample}
\small
\begin{tabular}{lrrrr}
\toprule
Perturbation & Spearman & MSE & Global flip & Hardest-20\% flip \\
\midrule
Light-tail Gaussian & .9860281863 & .02514 & 5.09\% & 23.91\% \\
Boundary-coupled & .9860281863 & .01700 & 10.00\% & 50.00\% \\
\bottomrule
\end{tabular}
\end{table}

The framework's failure modes are distinct and each is observable: nonpositive calibration, heavy residual tails, high boundary mass, a saturated trace or path certificate, a post-processing map that merges distinct traces, an unidentifiable shell, and deployment blocks that differ from calibration blocks.  Each produces either an empirically high risk or an uncertified decision.

\section{Code Release}
\label{app:code}

The release consists of numbered, non-interactive Python modules with fixed seeds and machine-readable outputs: modules 01--06 compute the representation diagnostics of Appendices~\ref{app:foundational} and~\ref{app:mechanisms}; 07--09 the decision-stability and covariance experiments; 10--14 the pruning replay, trace, edge, refill, and path experiments; 15, 19, 21, and 22 the Gaussian eligibility audits; 16, 17, and 28 the held-out certificates; 18 the necessity constructions; 20 the cross-quantizer interface; 23 the conditional-MGF parameters; 24--27 the semantic and numerical audits of the pruning rule, calibration, Stein identity, and contamination formulas; and 29 the candidate-selection covariance experiment.  Samples are drawn without loading complete artifacts, no validation outcome is used to retune a frozen certificate, and negative results, ineligible configurations, and saturated bounds remain in the result files.

\vskip 0.2in
\bibliography{references}

\end{document}